\def\technicalreport{} 
  \newcommand\tr[1]{}
  \newcommand\ntr[1]{#1}
  \newcommand\tr[1]{#1}
  \newcommand\ntr[1]{}
\renewenvironment{proof}{\noindent {\bf Proof: }}{\hfill\proofend\linebreak}
\newenvironment{proofof}[1]{\noindent {\bf Proof of #1: }}{\hfill\proofend\linebreak}
\newcommand{\proofend}{$\Box$}
\newcommand{\OO}{\ensuremath{{\cal O}}}
\newcommand\abs[1]{\left|#1\right|}
\newcommand{\vect}[1]{\mathbf{#1}}
\newcommand\dist[1]{\left\|#1\right\|}
\newcommand{\RNum}[1]{\uppercase\expandafter{\romannumeral #1\relax}}
\newcommand{\keywords}[1]{\par\addvspace\baselineskip
\noindent\keywordname\enspace\ignorespaces#1}
\begin{document}

\mainmatter  

\ntr{\newcommand\titleplain{Self-Synchronized Cooperative Beamforming in Ad-Hoc Networks}}
\tr{\newcommand\titleplain{Ad-Hoc Network Unicast in Time $\OO(\log\log n)$    using Beamforming}}

\title{\titleplain}

\titlerunning{\titleplain}

%
%
\author{Thomas Janson \and Christian Schindelhauer}
\authorrunning{Thomas Janson \and Christian Schindelhauer}

\institute{University of Freiburg,\\
Germany\\ 
\mailsc\\
}

\toctitle{\titleplain}
\tocauthor{Thomas Janson, Christian Schindelhauer}
\maketitle

\begin{abstract}
We investigate the unicast problem for ad-hoc networks in the plane using MIMO techniques. 
In particular, we use the multi-node beamforming gain and present a self-synchronizing algorithm for the necessary carrier phase synchronization.
First, we consider $n$ nodes in a grid where the transmission power per node is restricted to reach the neighboring node.
We extend the idea of multi-hop routing and relay the message by multiple nodes attaining joint beamforming gain with higher reception range.
In each round, the message is repeated by relay nodes at dedicated positions after a fixed waiting period.
Such simple algorithms can send a message from any node to any other node in time $\mathcal{O}(\log \log n - \log \lambda)$ 
and with asymptotical energy $\mathcal{O}(\sqrt{n})$, the same energy an optimal multi-hop routing strategy needs using short hops between source and target.
Here, $\lambda$ denotes the wavelength of the carrier. For $\lambda \in \Theta(1)$ we prove a tight lower time bound of $\Omega(\log \log n)$.

Then, we consider $n$ randomly distributed nodes in a square of area $n$ and we show for a transmission range of $\Theta(\sqrt{\log n})$ and for a wavelength of $\lambda = \Omega(\log^{-1/2}n)$ that the unicast problem can be solved in $\mathcal{O}(\log \log n)$ rounds as well. The corresponding transmission energy  increases to $\mathcal{O}(\sqrt{n} \log n)$.
%
 Finally, we present  simulation results   visualizing the nature of our algorithms.

\keywords{Ad-hoc networks, unicast, MIMO, beamforming, signal-to-noise ratio, synchronization}
\end{abstract}


\section{Introduction} \label{s:introduction}

Mobile devices reduce their wireless transmission power to prolong battery lifetime. An energy preserving extension of the transmission range is cooperative beamforming. Here, nodes  cooperate by sending the same message and produce together a stronger signal than a single node. Without further adaption the different positions of the senders result in a delay skew such that the signals may not be correlated at some receiver positions.
 When the sending times are coordinated we achieve the so-called beamforming, where the radiant sender beams result in a strongly correlated signal towards a certain direction. 
In \cite{JS13_Beamforming_Line}, we study fundamental features of phase-synchronized ad-hoc network nodes and show an exponential speedup for the broadcast operation of nodes placed  on a line.
Here, we are concerned in extending these observations to the two-dimensional plane. 

Unicast is defined as transfer of a message from a source node to a target node. For wireless communication the straight-forward solution is a direct transmission by increasing the signal strength at the sender such that the target node can receive the signal. While the message delay is optimal, the necessary transmission power is the drawback, since it quadratically increases with respect to the distance between sender and receiver. 

In a power constraint scenario 
direct communication is not always available. Then, routes with multiple hops  must be used. Messages are passed from the source via relay nodes towards the target. Regarding the sum of transmission energy, strategies with many short hops are better than single hop strategies.  On the other hand, the delay increases with the number of hops. Here, we consider networks with $n$ nodes in the plane placed on a  $\sqrt{n} \times\sqrt{n} $ quadratic grid with unit distance between neighbored nodes. The delay or routing time for multi-hop routing with distances $1$ each is $\OO\left(\sqrt{n}\right)$. The energy consumption compared to direct communication decreases by a factor of  $\OO\left(1/\sqrt{n}\right)$.

Multi-hop routing implements time multiplexing, i.e. using several time slots, and spatial multiplexing by blocking a smaller area for communication compared to direct communication. 
However, the simultaneously sending nodes can do much better when one uses cooperative beamforming.
One might expect that doubling the power of two senders increases the transmission range by a factor of $\sqrt2$. 
However, the superposition principle for electric fields implies that the signal strengths add up and this strength is proportional to the square root of the transmission energy. Therefore, the reception range of two close phase-synchronized senders increases by a factor of two  \cite{JS13_Beamforming_Line}.

This is the beamforming aspect of MIMO (multiple input/multiple output) technology in the line of sight case. 
Besides beamforming, MIMO allows to establish parallel channels with $n$ senders (input) and $m$ receivers (output), resulting up to $\min\{n,m\}$ parallel transmission channels. 
For this it is necessary that sig\-nals are reflected from obstacles in the environment, if the sender and receiver antennas are distant. \tr{The channel matrix $H$ describes for each sender/receiver pair the attenuation and phase shift between them. If  this matrix $H$ shows many large eigenvalues, then parallel channels enable increased throughput, in addition with suitable encoding and decoding. 
}\ntr{However}\tr{Therefore}, MIMO signal processing is complex and MIMO does not work in the line-of-sight scenario with distant sender and receiver antenna arrays unlike beamforming. 

In this paper we consider the line-of-sight model\ntr{ and beamforming. }\tr{, where the channel matrix $H$ does not allow multiple channels. Therefore, beamforming is the focus of this paper. }It is achieved by adjusting the sender time points such that the received signal consists of synchronized signals which add up because of the superposition principle. A message can be received if this signal strength is larger than a given value, i.e. the signal-to-noise ratio threshold.
%


Our main method is to assign rectangular areas for suitable relay nodes. These nodes cooperate for the beamforming of the unicast message. For this, nodes store the received message and resend it at time points depending on the reception times. We restrict the corresponding transmission power such that each node can only reach its neighborhood  without beamforming. The overall goal is to minimize the transmission time of a single unicast message.

\ntr{
Due to page limitations some proofs are presented in a technical report \cite{JS14_Beamforming_LogLog_TR}.}
%


\section{Related Work}

Gupta and Kumar  \cite{Gupta00thecapacity} analyze  the throughput capacity of wireless networks. The throughput capacity of a network node specifies the average data rate to a communication partner multiplied by the communication distance. 
For the case of nodes positioned independently at random in the plane and random communication pairings, they show that the capacity is  $\Theta(\frac{1}{\sqrt{n \log n}})$ in the best case. Here, multiple hop routes using next neighbors turn out to be the best choice. 
It turns out that the communication bottleneck is a cut through the middle of the network, on which each node has to uphold $\mathcal{O}\left(\sqrt{n}\right)$ connections throttling the throughput by a factor of $\mathcal{O}(\frac{1}{\sqrt{n}})$.
%
%
It is necessary to increase the sending power by $\mathcal{O}\left(\log n\right)$ to guarantee network connectivity with high probability.
By this, the throughput is further reduced by a factor of $\mathcal{O}(\frac{1}{\sqrt{\log n}})$.
In such a model,  our beamforming approach reaches only a throughput capacity comparable to direct point-to-point communication. Yet, for a scenario with only one point-to-point communication, where the transmission power is limited to $\Theta(\frac{\log n}{n})$ (the best case of \cite{Gupta00thecapacity}), the multi-hop scheme 
 has a throughput of $\Theta(\frac{\sqrt{\log n}}{\sqrt{n}})$, while our unicast has a throughput of $\Theta(\frac{1}{\log\log n})$.

In  \cite{JS13_Beamforming_Line} we present broadcasting algorithms for nodes on a line in the line-of-sight case.  
We prove that broadcasting can be done in   $\OO\left(\log n\right)$ rounds for $n$ nodes regularly placed on a line, where each node alone can only reach its next neighbor. This is obtained by the beamforming gain and on-the-fly synchronization using only the reception time of the message. 
This scheme produces only constant factor increase of the energy consumption compared to direct neighbor communication, which needs $\OO\left(n\right)$ rounds. Here, we consider the two-dimensional setting for the same model
and reuse the one-dimensional variant as a startup sub-routine. 


\tr{In \cite{2012RandomMIMOJanson} we analyze beamforming gain for antennas placed in an area. We estimate the angle of the main beam for $m$ randomly placed senders in a disk, which has size $\OO(\lambda/d)$ where $d$ is the diameter of the disk and $\lambda$ the carrier wavelength.
We find side beams within an angle of $\OO(\lambda/(d \sqrt{m}))$. Towards other directions, the signal strength is reduced to an expected size of $\sqrt{m}$ times the sender's signal strength, while the main beam is $m$ times larger than each sender's signal strength.}


In \cite{NGS09_Linear_Capacity_Beamforming,ozgur2010linearCapacity} communication schemes are presented that achieve order-optimal throughput by using MIMO techniques. Here, nodes in designated areas cooperate in order to increase the communication capacity resulting in higher bandwidth or increased transmission radius.  
In \cite{NGS09_Linear_Capacity_Beamforming} the beamforming gain is exploited at designated areas of relay nodes between sender and receiver. In \cite{ozgur2010linearCapacity} diversity gain of highly parallel MIMO channels is used.
An important step in many MIMO protocols is encoding and decoding the transmitted signal, which needs additional communication at the sender and receiver side. In practice, this is achieved by wiring the sender/receiver antennas into one device.  For ad-hoc networks this step has to be emulated via wireless communication. The authors use a hierarchical approach, where the communication for the encoding at the sender nodes is organized by a recursive algorithm (and vice versa for the decoding at the receiver nodes). If this step can be done without a substantiate increase of the original message size (which may be doubted), then this achieves a capacity and time gain.
%
%
The transmission time is $\OO\left(\log n\right)$, which corresponds to the number of hierarchical steps and the capacity is up to linear depending on the path loss model. However, a minimum message length is required depending on the capacity and the authors  assume a  channel matrix with large eigenvalues, in contrast to the free-space model underlying this work. Here, we solve unicasting in time $\OO(\log \log n)$ and the algorithms  presented here are much simpler, since they do not use any MIMO encoding/decoding.
%
 


The authors of \cite{mlo13_telescopic_beamforming} use a similar approach by using beamforming of rectangular areas. Their algorithm spreads the information to a telescope-like region with increasing adjacent rectangles. Then, a mirrored construction is appended in order to reach the target node. They conclude that the beamforming gain is maximized up to a constant factor at each receiver as long as the area size of beamforming nodes is much smaller than $\sqrt{n}$ for $n$ nodes in the network. The authors cannot give a closed form for the dimensions of the rectangles and refer to a Matlab program computing optimal sizes. An important difference to our approach is that they allow additional transmission power $a>1$ for a short period $1/a$. Interestingly, their choice is $a = \Theta(1/n^{2/3})$ which results in throughput $T= \OO( n^{2/3})$. We show that the choice of adjacent rectangles might be problematic, since our simulation results indicate that some receivers in the adjacent rectangle might not be reached. In this paper, we emphasize the large influence of the carrier wavelength and present a closed-form solution for the placement and dimensions of rectangular beam-forming areas. 
Furthermore, we present a solution which does not need the full channel state information.


\section{Physical Model}

The signal quality and the related transmission bandwidth of a communication channel between sender and receiver is difficult to model because of many effects arising in practice, e.g. multi-path propagation, diffraction, changing environment, node movement, etc. We neglect these effects and use the free-space model, where the signal strength as a function of the position of nodes in the network.  
Following \cite{Tse_fundamentals_book}, the signal output $y$ at the receiver depends on the signal inputs at senders $x_1, \ldots, x_m$ as 
\begin{eqnarray}
y = \sum_{i=1}^m h_i \cdot x_i \ . \label{eq:super}
\end{eqnarray}
This establishes the physical input-output-model of a MISO channel (Multiple Input Single Output).  Inputs and outputs are seen from the communication channel and not from the senders or receivers.
We assume that all nodes emit the same input signal $x=x_i$ with the same transmission power but with a time shift in order to correlate the phases resulting in a beamforming gain at the target with output $y$. We denote by $j$  the imaginary number  ($j^2 =-1$).
The baseband channel gain $h_i$ for the $i$-th sender node is
\begin{eqnarray}
h_i &=& \frac{1}{\dist{\vect{u_i},\vect{v}}} \cdot e^{\textstyle -\frac{j2\pi}{\lambda} \cdot \dist{\vect{u_i},\vect{v}}}.
\end{eqnarray}
The attenuation factor $\dist{\vect{u_i},\vect{v}}^{-1}$ describes the path loss depending on the distance  $\dist{\vect{u_i},\vect{v}}$ between the nodes at positions $u_i$ and $v$.  Since the power is proportional to the square of the signal strength this corresponds to the standard energy path loss model for line-of-sight and the far-field assumption with $\dist{\vect{u_i},\vect{v}} > 2\lambda$ where the energy decreases proportional to $\dist{\vect{u_i},\vect{v}}^{-2}$.
The wavelength $\lambda=c/f$ of the carrier frequency $f$ plays an important role for the beamforming. We denote by $c$ the speed of light. In~\cite{2012RandomMIMOJanson} we show that the sender geometry and the wavelength determine the width of the main beam, as well as the size of side beams. The distance between sender and receiver also results in a phase shift described by a rotation of the signal in complex space.

This signal value describes the electric field produced by the sender, and by the superposition principle the resulting field is the sum of the signals in Equation~(\ref{eq:super}).
Interfering radio signals and errors occurring during the modulation and demodulation  are modeled as being uncorrelated to the line-of-sight signal as additive white Gaussian noise $w$,  which is Gaussian distributed $w \sim \mathcal{N}\left(0,\sigma^2\right)$ with variance $\sigma^2$. So, the received signal is described by $y+w$.

 A signal can be received if the signal to noise ratio is larger than a threshold $\tau$, i.e.
$\text{SNR} = \frac{P}{N}\ \geq \ \tau \ , $ where $N$ is the energy of the noise.

We restrict the transmission power for each node in the grid such that only the vertical and horizontal neighbors in distance can be reached, if only a single sender is active. The received signal power is modeled by $P=|y|^2$.

So, we choose $\tau = 1$  and $|x_i| \leq 1$ to describe the situation in the grid. We also consider the random placement model, where we randomly position $n$ nodes into a grid of area $n$. In~\cite{gupta1998critical} it is shown that the minimum transmission distance for achieving connectivity in this model is $\Omega(\sqrt{\log n})$. Therefore, we increase the maximum size  $|x_i|\leq  k (\log n)^{1/2} $  of the signal and let $\tau =1$ for some constant $k$.
%

According to the Shannon-Hartley theorem, it is possible to achieve an information rate of  $ B\cdot\log\left(1+\text{SNR}\right)$. So, a higher signal-to-noise ratio can increase the information rate. This effect is not used in this work, since at the relevant receiver antennas the received signal power is close to the SNR threshold.

\section{Loglog n Unicast} \label{sec:loglog_unicast_algorithm}


The basic idea of our unicast algorithm is a multi-hop algorithm with relays between sender and receiver shown in Figure~\ref{fig:Unicast_multihop}, but with the special property that each relay consists of multiple nodes which cooperate to perform joint sender beamforming, see Figure~\ref{fig:Rectangle_Beamforming_Gain}.
\begin{figure}[htb]
  \centering
  \subfigure[Multi-hop between rectangles of beamforming senders.]{
    \label{fig:Unicast_multihop}
   \includegraphics[scale=0.55]{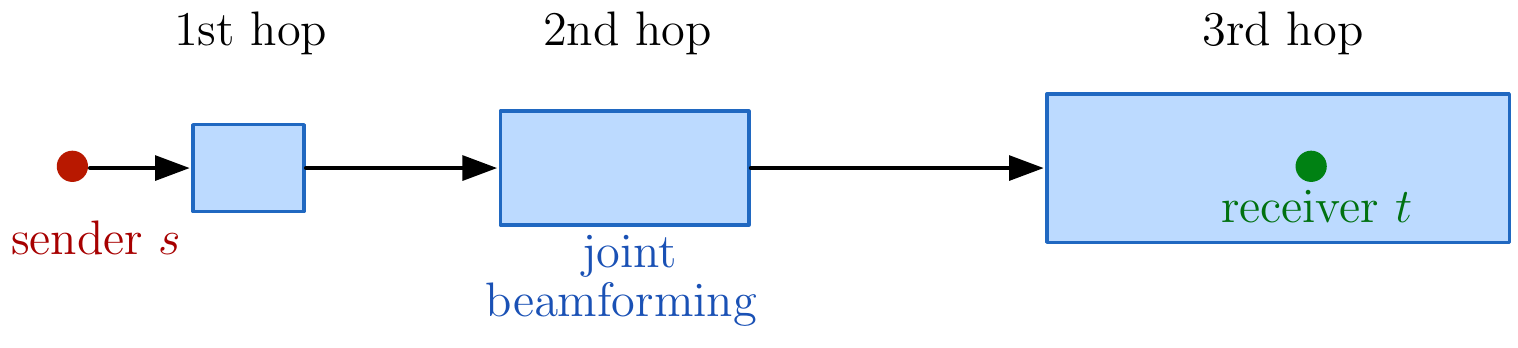}
  }\hspace{0.5em}
  \subfigure[Beamforming from sender to receiver rectangle]{
    \label{fig:Rectangle_Beamforming_Gain}
    \includegraphics[scale=0.55]{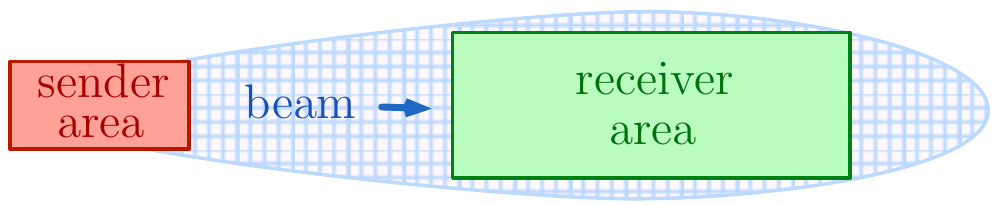}
  }
  \caption{Scheme of the $\OO(\log \log n)$-Unicast algorithm.}
  \label{fig:Unicast_problem_setting}
\end{figure}
With beamforming gain, the hop distance increases double exponentially such that this unicast algorithm needs $\OO\left(\log\log n\right)$ hops from the source to the target.

We use beamforming for sending\tr{\footnote{ 
We make no use of receiver beamforming (SIMO). It requires that cooperative nodes exchange the received signals as quantized data for signal processing. For large sets of receivers, multi-hop transmissions \tr{of the signals }are necessary, and more over, the message size grows exponentially if receiver beamforming is applied recursively\tr{ since quantizing the signal has to be applied recursively}.}} (MISO) which requires, when performed with several senders in parallel, the distribution of the  message to all senders and phase synchronization between all senders. As Figure~\ref{fig:Rectangle_Beamforming_Gain} indicates, we will show that we can broadcast a message from a sender to a receiver area with rectangular shape such that all nodes in the receiver area have the same message for cooperated sender beamforming in the next round.
 For synchronizing the sender phases, we present two algorithms. Algorithm~\ref{alg:unicast1} corrects the phase at the relay nodes using the position of the nodes, whereas Algorithm~\ref{alg:unicast2} is self-synchronizing.  Algorithm~\ref{alg:unicast1} outperforms Algorithm~\ref{alg:unicast2} regarding the transmission time by a constant factor.

%
%

We first describe the $\OO(\log \log n)$-unicast algorithm in a network with $\sqrt{n}\times\sqrt{n}$ nodes placed in a grid. For unit grid distance we assume $\lambda\le\frac{1}{2}$ to meet the far-field assumption.
We start to describe the algorithm for a message transmission along the $x$-axis in the middle of the grid and generalize it for other coordinates, later on. The source node is at $s=(0, 0)$ and the target node at $t=(\sqrt{n},0)$.
The algorithm consists of two phases, an initial phase (Fig.~\ref{fig:Unicast_multihop} 1st hop) where we broadcast the message from the source to the first rectangle of relay nodes, and a second phase where we perform multi-hop with distributed beamforming (Fig.~\ref{fig:Unicast_multihop}, 2nd, 3rd hop). The required rectangular area to be informed in phase 1 follows from the requirements of phase two, and thus we present phase 2 first.

We first describe how to set up phases for distributed beamforming when the senders are placed on a line along the  $x$-axis (see Fig.~\ref{fig:Line_Beamforming_Sync}) and extend that for  rectangles in the plane, later on.
\begin{figure}[hbt]
	\begin{center}
	\includegraphics[scale=0.58]{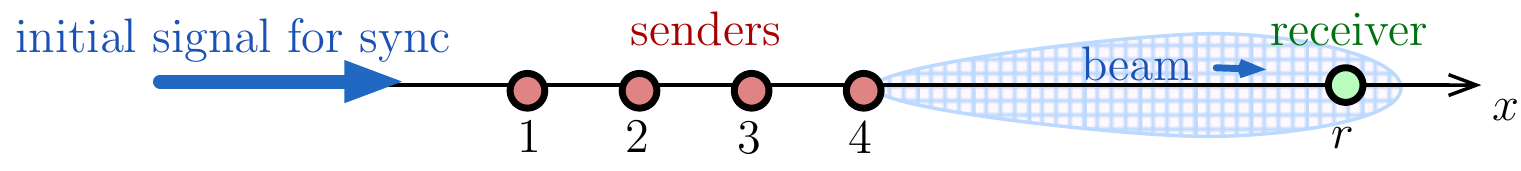}
	\end{center}\vspace{-1em}
	\caption{Synchronization in the one-dimensional case}
	\label{fig:Line_Beamforming_Sync}
\end{figure}
Assume we have senders placed at $\left(i,0\right)$ with $1\le i\le n$ performing beamforming to a receiver $r$ at $\left(r_x,0\right)$ with $r_x >n$.
To attain full beamforming gain, the senders start the transmission with a delay of $\left(n-i\right)/c$ for propagation speed $c$ such that all transmissions arrive exactly at the same time and consequently in the same phase. We synchronize all senders with the initial signal containing the message. It is received at a node placed at $\left(i,0\right)$ at time $t = i/c$ and  if each node resends the message immediately, it sends the message with delay $-i/c$, which is the desired beamforming setup to receiver $r$. Hence, broadcasting along a line achieves self-synchronization for distributed beamforming.

For beamforming senders in a rectangle, we use the same synchronization setup, and each node $u$ at coordinates $\left(u_x,u_y\right)$ sends at time $t=u_x/c - t_0$ which only depends on the $x$-coordinate and offset time $t_0$ has to be chosen such that the sender with smallest $u_x$ sends at time $t=0$ without delay. If it holds $\dist{u,r} = r_x-u_x$, which is the case for nodes along the $x$-axis, the synchronization is perfect. But for a rectangular area of nodes with width $w_i$ and height $h_i$, the reception delay depends also on the $y$-coordinate.
The delay function $\psi\left(i,r\right)$ computes for a receiver at coordinates $\vect{r}=\left(r_x,r_y\right)$ the delay to attain synchronization, which is phase angle $\arg[e^{-j2\pi r_x/\lambda}]$.
\begin{eqnarray}
\psi\left(i,\vect{r}\right) &=& \frac{1}{f} + \frac{1}{2\pi f} \arg\left[ \sum_{s\in\left(w_{i-1}\times h_{i-1}\right)} \frac{e^{-j2\pi \left(\dist{\vect{s},\vect{r}}-r_x\right)/\lambda}}{\dist{\vect{s},\vect{r}}} \right] \label{eq:rectangle_delay}
\end{eqnarray}
When applying delay $\psi\left(i,r\right)$ at each receiver $r$, all nodes are synchronized for beamforming such that each node $r$ sends with a delay of $-r_x/c$.
By a proper choice of the dimensions of the rectangles $\left(w_i,h_i\right)$, we can assure that the phase shift 
is less than $\pi/2$ and thus $\psi\left(i,\vect{r}\right)>0$ (compare Lemma~\ref{Le:constant_phase_shift}). 

This leads to Algorithm~\ref{alg:unicast1} where the delay $\psi\left(i,r\right)$ is used in line~\ref{alg_unicast1_delay} in order to synchronize the receivers in the $i$-th round for the  $w_i \times h_i$-receiver area. The if-condition in Line~\ref{alg_unicast1_receiverFilter} assures that only receivers in the correct receiver area process the message.
\ntr{\vspace{-.5em}}
\begin{algorithm}[h]
\caption{Unicast \RNum{1} }\label{alg:unicast1}
\begin{algorithmic}[1]
\Procedure{receive}{receiver $r$, message $m$, time $t$}
   \If{ \Call{isInRectangle}{$\text{round}\left(t\right)$, r}} \Comment{only process in active rectangle} \label{alg_unicast1_receiverFilter}
   	\State \Call{wait}{$\psi\left(\text{round}\left(t\right),r\right)$}	\Comment{phase correction} \label{alg_unicast1_delay}
   	\State \Call{send}{m}		\Comment{coordinated beamforming sending}
   \EndIf
\EndProcedure \vspace{0.2em}
\Function{isInRectangle}{round $i$, position $p$} \Comment{true for active receivers}
\State \Return $w_0+w_i + 2\sum_{k=1}^{i-1}w_k \le p_x \le w_0+2\sum_{k=1}^{i}w_k \text{ \& }  0 \le p_y \le h_i $
\EndFunction
\end{algorithmic}
\end{algorithm}
\ntr{\vspace{-.5em}}

The following Lemmas~\ref{Le:constant_phase_shift}-\ref{le:rectangle_dim} specify the dimensions and distances between rectangles of relay nodes where the multi-hop procedure of Algorithm~\ref{alg:unicast1} with distributed sender beamforming is possible.

\begin{lemma} \label{Le:constant_phase_shift}
If a single sender $s$ sends a signal to a $w \times h$ rectangular area in a distance of at least $w$ (see Figure~\ref{fig:Unicast_constant_delay}), then the phase shift with respect to the phase $2 \pi r_x/\lambda$ is at any receiver node $r$ inside the area at most $\alpha$ if $h^2 \le \frac{\alpha}{\pi} \lambda w$.
\end{lemma}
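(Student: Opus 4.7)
The plan is to set up coordinates with the sender $s$ at the origin, so that a receiver $r$ in the rectangle has coordinates $(r_x, r_y)$ with $r_x \ge w$ (that is the ``distance of at least $w$'' hypothesis, read as the minimum $x$-separation) and $|r_y| \le h$ (or $r_y \in [0,h]$, depending on the exact placement shown in the figure). Under these conventions the phase of the signal that arrives at $r$ is $2\pi\,\dist{\vect{s},\vect{r}}/\lambda = 2\pi\sqrt{r_x^2+r_y^2}/\lambda$, while the reference phase is $2\pi r_x/\lambda$. The phase shift to be bounded is therefore
\[
\Delta\varphi \;=\; \frac{2\pi}{\lambda}\bigl(\sqrt{r_x^2+r_y^2}-r_x\bigr),
\]
which is nonnegative because $\sqrt{r_x^2+r_y^2}\ge r_x$.

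The next step is a purely algebraic simplification. Rationalising the numerator gives
\[
\sqrt{r_x^2+r_y^2}-r_x \;=\; \frac{r_y^2}{\sqrt{r_x^2+r_y^2}+r_x} \;\le\; \frac{r_y^2}{2 r_x}.
\]
This is the standard ``paraxial'' approximation and is the only analytic ingredient needed; no higher-order Taylor terms are required because the inequality is already tight.

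Now I plug in the geometric bounds $r_x \ge w$ and $r_y^2 \le h^2$, which come directly from the hypotheses on the rectangle. This yields
\[
\sqrt{r_x^2+r_y^2}-r_x \;\le\; \frac{h^2}{2w},
\qquad\text{hence}\qquad
\Delta\varphi \;\le\; \frac{2\pi}{\lambda}\cdot\frac{h^2}{2w} \;=\; \frac{\pi h^2}{\lambda w}.
\]
The hypothesis $h^2 \le \frac{\alpha}{\pi}\lambda w$ immediately gives $\Delta\varphi \le \alpha$, which is what is claimed.

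There is no real obstacle here; the lemma is essentially a far-field Fresnel-type estimate. The only thing to be careful about is matching the geometric conventions in the figure, specifically whether the receiver's $y$-offset is bounded by $h$ or by $h/2$ (which at worst changes the constant in the conclusion by a factor of $4$), and verifying that the condition ``distance at least $w$'' is interpreted as the shortest $x$-gap between $s$ and the near edge of the rectangle. Once these conventions are fixed, the proof is the three lines above.
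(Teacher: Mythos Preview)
Your proof is correct and matches the paper's argument essentially line for line: the paper also places $s$ at the origin, writes $\delta = \frac{2\pi}{\lambda}(\sqrt{r_x^2+r_y^2}-r_x)$, bounds this by $\frac{\pi}{\lambda}\frac{r_y^2}{r_x}$ via the equivalent inequality $\sqrt{1+x^2}-1\le x^2/2$, maximises at $r_x=w$, $r_y=h$, and then applies the hypothesis $h^2\le\frac{\alpha}{\pi}\lambda w$. The only cosmetic difference is that you rationalise the numerator directly whereas the paper factors out $r_x$ and invokes a separate one-line lemma for $\sqrt{1+x^2}-1\le x^2/2$.
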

%
\begin{figure}[h]
	\begin{center}
	\includegraphics[scale=0.37]{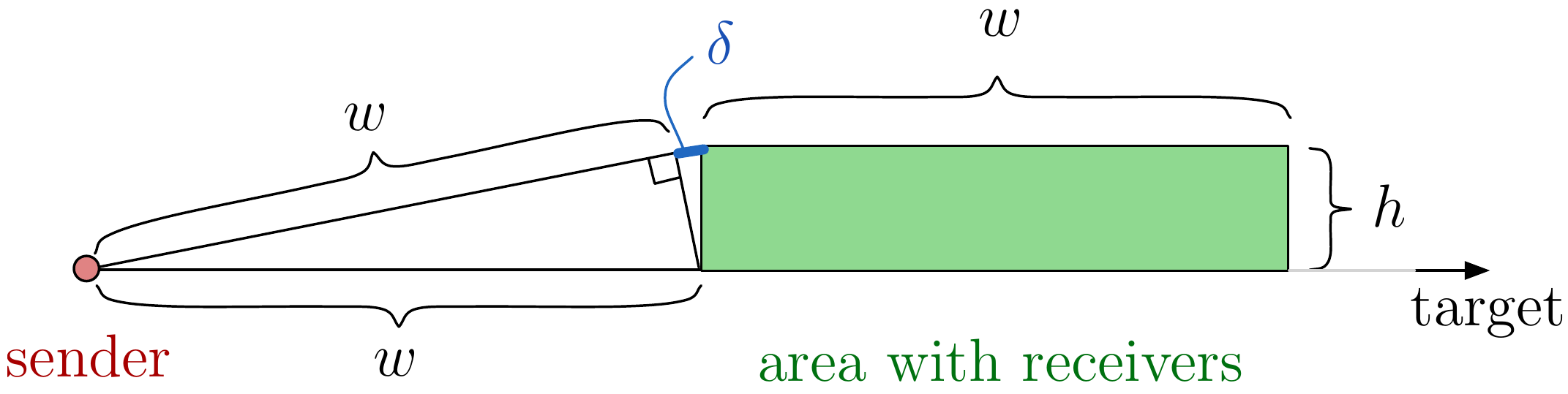} 
	\end{center}
	\caption{Broadcast of an single sender (red) to receivers in the green area.}
	\label{fig:Unicast_constant_delay}
\end{figure}
\begin{proof}
Let $x$ denote the signal of the sender $s$ and $y$ the signal at $r$. Then,
$$ y = \frac{x}{\dist{\vect{s},\vect{r}}} \cdot e^{\textstyle -\frac{j2\pi}{\lambda} \cdot \dist{\vect{s},\vect{r}}} \ .$$\vspace{-0.5em} 

Thus, the phase shift is described by 
$-\arg(\frac{y}{x}) = \frac{2\pi}{\lambda} \dist{\vect{s},\vect{r}}$. The difference of phase shifts is therefore 
\begin{eqnarray*}
 \delta &=& \frac{2\pi}{\lambda} \dist{\vect{s},\vect{r}}-\frac{2 \pi x}{\lambda} \ 
= \frac{2 \pi}{\lambda} \left(\sqrt{r_x^2+r_y^2}- r_x\right) 
 = \frac{2 \pi}{\lambda} r_x \left(\sqrt{1+\left(\frac{r_y}{r_x}\right)^2}-1\right) \ .
\end{eqnarray*}
\ntr{
This phase difference is maximized for $r_y = h$ and $r_x=w$ and by applying the relation $\sqrt{1+x^2}-1 \le \frac{x^2}{2}$ for all $x\geq0$ (see~\cite{JS14_Beamforming_LogLog_TR}) we get
\begin{eqnarray*}
\delta \leq \frac{\pi}{\lambda} \frac{r_y^2}{r_x} = \frac{\pi}{\lambda} \frac{h^2}{w} \ .
\end{eqnarray*}
}\tr{
We can apply Lemma~\ref{Marx-a} given in the Appendix and get 
$$  \delta \leq \frac{\pi}{\lambda} \frac{r_y^2}{r_x} \ .$$
This phase difference is maximized for $r_y = h$ and $r_x=w$. Then, 
$$ \delta \leq \frac{\pi}{\lambda} \frac{h^2}{w} \ .$$} 
From $h^2 \le \frac{\alpha}{\pi} \lambda w$ it follows that $\delta \leq \alpha$.
\end{proof}
Note that the difference between the signal and the offset is so small, e.g. for $\alpha \leq \pi/4$, that it is less than one wavelength. So, if we repeat the message transmission after a fixed time offset in the next round, then the message modulated upon the carrier wave is in sync with all the other sender nodes provided by using the same time offset.

\begin{lemma} \label{Le:Area_growth_known_position}\label{l:twos}
A $w_i\times h_i$-rectangular area of beamforming senders $S$ can reach any node in a $w_{i+1} \times h_{i+1}$ rectangle at distance $w_{i+1}$ if
\begin{eqnarray}
h_{i+1} &\geq& h_i \ , \label{l:eq2}\\
w_{i+1} &\geq& w_i \ , \label{l:eq3}\\
w_{i+1} &\leq& \frac1{3\sqrt2} w_i h_i \ , \label{l:eq4}\\
h_{i+1} &\leq& w_{i+1} \ , \ \text{and} \label{l:eqRatio}\\
h_{i+1}^2 &\leq& \frac1{4} \lambda w_{i+1}\ . \label{l:eq5}
\end{eqnarray}
\end{lemma}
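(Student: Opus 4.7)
I would fix an arbitrary receiver $r=(r_x,r_y)$ in the $w_{i+1}\times h_{i+1}$ target rectangle and show that the beamformed signal $y$ satisfies $|y|\ge 1$, which meets the SNR threshold $\tau=1$ (since $P=|y|^2$). Each of the $|S|=w_i h_i$ senders $u$ in the source rectangle transmits with the beamforming delay $u_x/c$, and so contributes $\frac{1}{\|u-r\|}\,e^{-j 2\pi(\|u-r\|+u_x)/\lambda}$ at $r$. The far-field expansion $\|u-r\|+u_x\approx r_x+(r_y-u_y)^2/(2(r_x-u_x))$ lets me split off a common phase and write
\[
 y \;=\; e^{-j 2\pi r_x/\lambda}\sum_{u\in S}\frac{e^{-j\phi_u}}{\|u-r\|},\qquad \phi_u\;=\;\frac{\pi(r_y-u_y)^2}{\lambda(r_x-u_x)}.
\]

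Next I would bound the residual phases $\phi_u$. Condition~(\ref{l:eq2}) guarantees that the target rectangle vertically covers the source, so $|r_y-u_y|\le h_{i+1}$; the gap between the rectangles gives $r_x-u_x\ge w_{i+1}$. Exactly the computation of Lemma~\ref{Le:constant_phase_shift} then yields $|\phi_u|\le \pi h_{i+1}^2/(\lambda w_{i+1})\le \pi/4$ by~(\ref{l:eq5}). Since the common phase has unit modulus, projecting the remaining sum onto the real axis gives
\[
 |y|\;\ge\; \sum_{u\in S}\frac{\cos\phi_u}{\|u-r\|}\;\ge\; \tfrac{1}{\sqrt 2}\sum_{u\in S}\frac{1}{\|u-r\|}.
\]

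To turn this into the threshold bound, I would estimate each $\|u-r\|$ from above. Condition~(\ref{l:eq3}) gives $r_x-u_x\le w_i+2w_{i+1}\le 3w_{i+1}$, and conditions~(\ref{l:eqRatio})--(\ref{l:eq5}) (combined with $\lambda\le 1/2$) force $h_{i+1}^2/w_{i+1}^2$ to be vanishingly small, so the transverse contribution to $\|u-r\|^2$ is negligible against $9w_{i+1}^2$. Hence $\|u-r\|\le 3w_{i+1}(1+o(1))$, whence $|y|\ge w_i h_i/(3\sqrt 2\, w_{i+1})\ge 1$ by~(\ref{l:eq4}). The chief obstacle is exactly this last step: the naive estimate $\|u-r\|\le\sqrt{(3w_{i+1})^2+h_{i+1}^2}\le\sqrt{10}\,w_{i+1}$ (from $h_{i+1}\le w_{i+1}$ alone) gives only $|y|\ge 3/\sqrt{10}<1$ at the boundary of~(\ref{l:eq4}), so the argument must explicitly exploit the tighter smallness of $h_{i+1}$ forced by the phase-coherence condition~(\ref{l:eq5}). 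Conditions~(\ref{l:eq2})--(\ref{l:eq3}) then play the secondary role of keeping the rectangles monotonically nested so that the multi-hop iteration across rounds is well-defined.
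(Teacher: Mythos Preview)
Your proposal is correct and follows essentially the same route as the paper: bound each residual phase by $\pi/4$ via the estimate of Lemma~\ref{Le:constant_phase_shift} together with~(\ref{l:eq5}), project onto the real axis to pick up the $\cos(\pi/4)=1/\sqrt2$ factor, upper-bound each $\|u-r\|$ by $3w_{i+1}$ using~(\ref{l:eq3}), and close with~(\ref{l:eq4}). You are in fact more careful than the paper on the distance step---the paper simply writes $\|s,r\|\le w_i+2w_{i+1}$ without comment, whereas you correctly observe that suppressing the transverse contribution needs the extra smallness of $h_{i+1}$ coming from~(\ref{l:eq5}), not merely $h_{i+1}\le w_{i+1}$.
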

\begin{proof}
%
Remember that all sending nodes of a vertical column in the grid  have the same phase. 
The received signal $y$ at node $r$ is 
\tr{
\begin{eqnarray*} y &=&
 \sum_{s\in S} x_s \frac{1}{\dist{\vect{s},\vect{r}}} \cdot e^{\displaystyle -\frac{j2\pi}{\lambda} \cdot \dist{\vect{s},\vect{r}}}\\
 &=&
 \sum_{s\in S} e^{j  \frac{2 \pi u_x}{\lambda}} \frac{1}{\dist{\vect{s},\vect{r}}} \cdot e^{\displaystyle -\frac{j2\pi}{\lambda} \cdot \dist{\vect{s},\vect{r}}} \\
 &=&
 \sum_{s\in S} \frac{1}{\dist{\vect{s},\vect{r}}} \cdot e^{\displaystyle -\frac{j2\pi}{\lambda} \cdot \dist{\vect{s},\vect{r}}+j  \frac{2 \pi u_x}{\lambda}} \ .
 \end{eqnarray*}
 } \ntr{
 \begin{eqnarray*} y &=&
 \sum_{s\in S} x_s \frac{e^{ -\frac{j2\pi}{\lambda} \cdot \dist{\vect{s},\vect{r}}}}{\dist{\vect{s},\vect{r}}}  
 = \sum_{s\in S} e^{j  \frac{2 \pi u_x}{\lambda}} \frac{e^{ -\frac{j2\pi}{\lambda} \cdot \dist{\vect{s},\vect{r}}}}{\dist{\vect{s},\vect{r}}}  
 = \sum_{s\in S} \frac{e^{ -\frac{j2\pi}{\lambda} \cdot \dist{\vect{s},\vect{r}}+j  \frac{2 \pi u_x}{\lambda}} }{\dist{\vect{s},\vect{r}}} \ .
 \end{eqnarray*} }
 And from 
 Lemma~\ref{Le:constant_phase_shift} we get ($\alpha=\pi/4$) for
 $$ \beta_{s,r} := \frac{2 \pi u_x}{\lambda} -\frac{2\pi}{\lambda} \cdot \dist{\vect{s},\vect{r}}$$ from 
 $w_i \leq w_{i+1}$ and inequality (\ref{l:eq5})
 \begin{equation}
 0 \leq \beta_{s,r} \leq \frac{\pi}{4}\ . \label{eq:phase_shift_range}
 \end{equation}
 We want to prove that $|y|^2 =\text{SNR} \geq \tau =1$. For this it suffices to prove that for the real part of $y$, i.e. that
 $\Re(y) \geq 1$, since $|y|^2 = \Im(y)^2 + \Re(y)^2$. 
 
Using, $\dist{\vect{s},\vect{r}} \leq w_i + 2w_{i+1}\leq 3 w_{i+1} \stackrel{\text{by} (\ref{l:eq4})}{\leq} \frac{1}{\sqrt2} w_i h_i = \frac{1}{\sqrt2}|S|$ we get
\begin{eqnarray*} \Re(y) 
= \sum_{s\in S} \frac{\Re(e^{-j \beta_{s,r}}) }{\dist{\vect{s},\vect{r}}} 
= \sum_{s\in S} \frac{\cos \beta_{s,r} }{\dist{\vect{s},\vect{r}}} 
\geq \sum_{s\in S} \frac{1}{w_i+2 w_{i+1}} \cos \frac{\pi}{4} 
\geq \frac{w_i h_i}{3 w_{i+1}} \frac1{\sqrt2}  
 \geq  1.
 \end{eqnarray*}
\end{proof}
Figure~\ref{fig:Broadcast_step_growth} illustrates the relation between the sender and the receiver area. The delay $\delta$ illustrates the largest possible value $\beta_{s,r}$ in the range of Eq.~(\ref{eq:phase_shift_range}).
\begin{figure}[h]
	\begin{center}
	\includegraphics[scale=0.33]{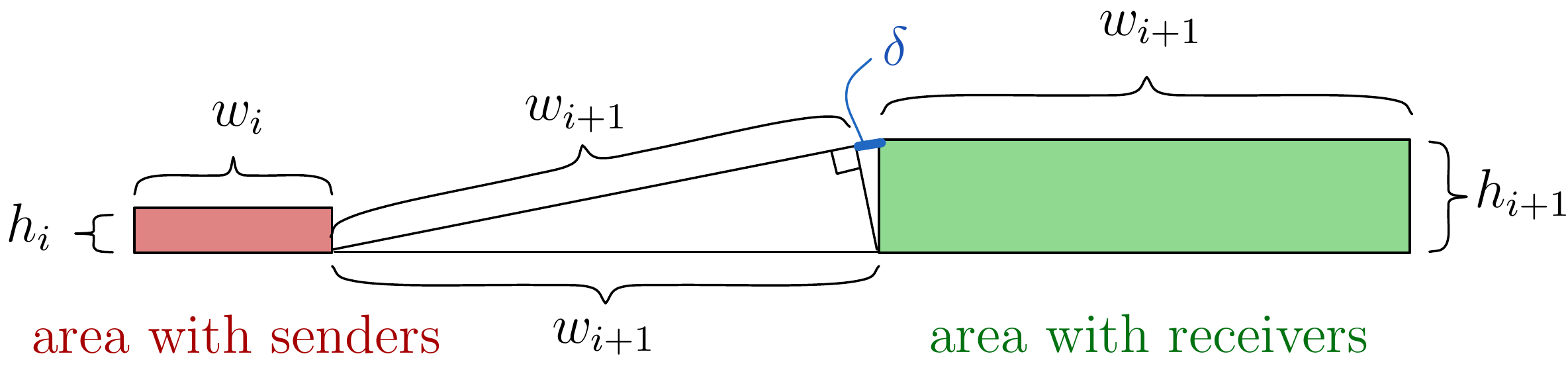}
	\end{center}
	\caption{Area growth during broadcast step.}
	\label{fig:Broadcast_step_growth}
\end{figure}
If the sender and the receiver are at the margin of the grid, we cannot expand the height of the relay node areas symmetrically along the line of sight between sender and receiver.
To apply the algorithm also at the margin of the network, we only expand the height of the rectangle in one direction, i.e. towards the center of the network. This has been already addressed in Equation~(\ref{l:eq5}).

%

This leads to the double exponential growth of the rectangles given in closed form in the following lemma.
\begin{lemma} \label{le:rectangle_dim}
The equations 
\begin{eqnarray}
w_{i} &=& \left(\frac{72}{\lambda}\right)
 \left(\frac{\lambda}{72}w_0 \right)^{(3/2)^{i}} \ ,\label{Eq:width}\\
h_{i} &=& \sqrt{18} \left(18^{-\frac{1}{2}} h_0\right)^{(3/2)^{i}} \ ,
\end{eqnarray}
for $i \in \{1,2, \ldots \}$
satisfy inequalities  (\ref{l:eq2}-\ref{l:eq5}) 
for $h_0 \geq 18^{\frac12}$, $w_0 \geq \frac{72}{\lambda}$ 
and
$h_0^2 = \frac14 \lambda w_0$.
\end{lemma}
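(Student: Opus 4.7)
My approach is to pass to normalized variables that absorb the constants so that the recurrence becomes a pure $x \mapsto x^{3/2}$ iteration, after which the five inequalities fall out essentially by a single invariant. Set
\begin{eqnarray*}
\tilde h_i \;=\; h_i/\sqrt{18}, \qquad \tilde w_i \;=\; \lambda w_i / 72.
\end{eqnarray*}
Then the stated closed forms read $\tilde h_i = \tilde h_0^{(3/2)^i}$ and $\tilde w_i = \tilde w_0^{(3/2)^i}$, so in particular $\tilde h_{i+1} = \tilde h_i^{3/2}$ and $\tilde w_{i+1} = \tilde w_i^{3/2}$. The hypotheses $h_0 \ge \sqrt{18}$ and $w_0 \ge 72/\lambda$ become $\tilde h_0 \ge 1$ and $\tilde w_0 \ge 1$, while the equation $h_0^2 = \lambda w_0 / 4$, after multiplying out the common factor $18 = (\lambda/4)(72/\lambda)$, reduces to the clean identity $\tilde h_0^2 = \tilde w_0$.

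The next step is to observe the \emph{key invariant} $\tilde h_i^2 = \tilde w_i$ for every $i \ge 0$: raising both sides of $\tilde h_0^2 = \tilde w_0$ to the power $(3/2)^i$ gives $\tilde h_i^2 = \tilde w_i$ directly. Armed with this invariant and with $\tilde h_i, \tilde w_i \ge 1$, each of the five inequalities becomes a one-line check. For (\ref{l:eq2}) and (\ref{l:eq3}) monotonicity of $x \mapsto x^{3/2}$ on $x \ge 1$ is enough. For (\ref{l:eq4}) I will use that $\sqrt{18}/(3\sqrt{2}) = 1$, so the inequality collapses to $\tilde w_i^{1/2} \le \tilde h_i$, which is exactly the invariant (with equality). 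For (\ref{l:eq5}) the same normalization turns it into $\tilde h_{i+1}^2 \le \tilde w_{i+1}$, again the invariant at index $i+1$. For (\ref{l:eqRatio}) the inequality rearranges to $\tilde h_{i+1} \ge \sqrt{18}\,\lambda/72$, which is immediate from $\tilde h_{i+1} \ge \tilde h_0 \ge 1$ and $\lambda \le 1/2$.

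I do not expect any genuine obstacle, since the closed form was evidently reverse-engineered to make the two binding recurrences (\ref{l:eq4}) on $w_{i+1}$ and (\ref{l:eq5}) on $h_{i+1}^2$ hold with \emph{equality}; the remaining three inequalities are then either trivial monotonicity or immediate from the hypotheses. The only care required is the bookkeeping of the constants $\sqrt{18} = 3\sqrt{2}$, $(\lambda/4)(72/\lambda) = 18$, and the verification that $\tilde h_0 \ge 1$, $\tilde w_0 \ge 1$ follow from the stated lower bounds on $h_0$ and $w_0$. Once these are noted, the induction on $i$ is essentially automatic.
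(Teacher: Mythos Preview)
Your proof is correct and follows the same underlying logic as the paper's, namely that the closed forms are chosen so that (\ref{l:eq4}) and (\ref{l:eq5}) hold with equality for every $i$, after which (\ref{l:eq2}) and (\ref{l:eq3}) are just monotonicity of $x\mapsto x^{3/2}$ on $[1,\infty)$ given the lower bounds on $h_0,w_0$.

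The difference is one of packaging. The paper substitutes the closed forms directly and computes, e.g.\ showing $\frac{1}{\sqrt{18}}w_ih_i = \frac{72}{\lambda}\bigl(\frac{\lambda}{72}w_0\bigr)^{(3/2)^{i+1}} = w_{i+1}$ and $h_i^2 = \frac{1}{4}\lambda w_i$ by brute-force algebra on the exponents. Your change of variables $\tilde h_i = h_i/\sqrt{18}$, $\tilde w_i = \lambda w_i/72$ is a cleaner way to organize the same computation: it reduces everything to the single invariant $\tilde h_i^2 = \tilde w_i$ preserved by $x\mapsto x^{3/2}$, so that (\ref{l:eq4}) and (\ref{l:eq5}) become tautologies and the constants $\sqrt{18}=3\sqrt2$ and $(\lambda/4)(72/\lambda)=18$ need only be checked once. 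This buys readability and makes it transparent why the particular constants $72/\lambda$ and $\sqrt{18}$ appear in the closed form. As a bonus, you explicitly verify (\ref{l:eqRatio}) using the standing assumption $\lambda\le 1/2$; the paper's proof as written omits this case.
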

\ntr{The proof can be found in \cite{JS14_Beamforming_LogLog_TR}.}
\tr{\begin{proof}
\begin{enumerate}
\item[(\ref{l:eq2})]: Merging Inequality~(\ref{l:eq4}) with~(\ref{l:eq5}) gives $h_{i+1} \le \frac{\lambda}{4}\cdot\frac{1}{\sqrt{18}}\cdot w_i h_i \le \frac{1}{\sqrt{18}} h_i^{3/2}$.
Then, $h_i \leq h_{i+1}$ follows from $h_0   \geq \sqrt{18}$.
\item[(\ref{l:eq3})]: $w_i \leq w_{i+1}$
is true if 
$w_0 \geq \frac{72}{w_0}$.
\item[(\ref{l:eq4})]:  $w_{i+1} \leq \frac1{3\sqrt2} w_i h_i $

Now $h_0 = 2 \sqrt{\lambda w_0}$, which implies
$ w_0 h_0 = 2 \sqrt{\lambda} w_0^{3/2}$.
Therefore,
\begin{eqnarray*}
\frac1{\sqrt{18}} w_i h_i
& = & \frac{72}{\lambda} \left(\frac{\lambda}{72}
\sqrt{\frac{\lambda }{72}} w_0^{3/2}\right)^{(3/2)^{i}}\\
& = & \frac{72}{\lambda} \left(\frac{\lambda  }{72}w_0\right)^{(3/2)^{i+1}}\\
& = & w_{i+1}
\end{eqnarray*}
\item[(\ref{l:eq4})] $ h_{i}^2 \leq \frac1{4} \lambda w_{i}$:
The following equations finalize the proof.
%
%
\begin{eqnarray*}
h_0^2 &=& \frac14 \lambda w_0 \\
\frac1{18} h_0^2 &=& \frac{\lambda w_0}{72} \\
  18 \left(\frac1{18} h_0^2\right)^{(3/2)^{i}}  &=& \frac14 \lambda\left(\frac{72}{\lambda}\right)
 \left(\frac{{\lambda}  }{{72}}w_0 \right)^{(3/2)^{i}} \\
\left.h_i\right.^2 & = & \frac14 \lambda w_i
\end{eqnarray*}
\end{enumerate}
$ $
%
%
%
\end{proof}
} 

So far, we assume that after the receipt of a message the relay node calculates the received phase from the senders' positions and readjusts the phase such that all vertical nodes are in phase. This step is not necessary, if the dimensions of the rectangles are chosen according to Lemma~\ref{Le:Area_growth_unknown_position}. Then, the received signal can be sent without phase correction from each relay node. The algorithm then reduces to two steps: If a message has been received, relay nodes check from the message header whether they are in the correct rectangles. Then, each relay node repeats the messages after the same time offset.

\ntr{\vspace{-0.5em}}
\begin{algorithm}[h]
\caption{Unicast \RNum{2} }\label{alg:unicast2}
\begin{algorithmic}[1]
\Procedure{receive}{receiver $r$, message $m$, time $t$}
   \If{\Call{isInRectangle}{$\text{round}\left(t\right)$, r}} \Comment{only process in active rectangle}
   	\State \Call{send}{m}		\Comment{coordinated beamforming sending}
   \EndIf
\EndProcedure
\end{algorithmic}
\end{algorithm}
\ntr{\vspace{-0.5em}}

\ntr{
\vspace{-1em}
\begin{lemma} \label{Le:Area_growth_unknown_position}
If the phase errors are not corrected in this routing, then the correct signal can be received if we use the following inequality instead of (\ref{l:eq5}).
\begin{eqnarray}
h_{i}^2 &\leq& \frac{3}{2\pi^2} \frac{1}{(i+1)^2}  \lambda w_{i}\ . \label{l:eq9}
\end{eqnarray}
\end{lemma}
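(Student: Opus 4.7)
The plan is to mirror the proof of Lemma~\ref{l:twos}, but with the phase error $\beta_{s,r}$ replaced by a cumulative quantity that collects, instead of absorbs, the per-round mismatch produced by Lemma~\ref{Le:constant_phase_shift}. The crucial observation is that Algorithm~\ref{alg:unicast2} is identical to Algorithm~\ref{alg:unicast1} up to the fact that the corrective wait $\psi(i,r)$ is omitted at every relay; hence the signal that any node in the $k$-th rectangle retransmits carries the residual phase error bounded in Lemma~\ref{Le:constant_phase_shift} with $\alpha_k = \pi h_k^2/(\lambda w_k)$. Because the relays simply re-emit after a fixed time offset, these per-round errors compose additively in the exponent of the baseband channel model.

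Concretely, I would first express the signal arriving at a receiver $r$ in rectangle $i+1$ as
\begin{equation*}
 y \;=\; \sum_{s\in S_i}\frac{1}{\dist{\vect{s},\vect{r}}}\,\exp\!\left(-\tfrac{j2\pi}{\lambda}\dist{\vect{s},\vect{r}} + j\tfrac{2\pi u_x}{\lambda} + j\sum_{k=0}^{i-1}\varepsilon_k(s)\right),
\end{equation*}
where $\varepsilon_k(s)$ is the residual phase picked up by $s$ in the $k$-th hop. Lemma~\ref{Le:constant_phase_shift} applied to hop $k$ (with width $w_k$ and height $h_k$) yields $|\varepsilon_k(s)|\le \pi h_k^2/(\lambda w_k)$. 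Substituting the hypothesis $h_k^2 \le \frac{3}{2\pi^2 (k+1)^2}\lambda w_k$ gives the per-hop bound $|\varepsilon_k(s)| \le \frac{3}{2\pi(k+1)^2}$.

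Next I would sum the per-hop errors together with the current round's shift $\beta_{s,r}$ from Lemma~\ref{l:twos} and invoke the Basel identity $\sum_{k\ge 1}1/k^2 = \pi^2/6$:
\begin{equation*}
 \sum_{k=0}^{\infty}\frac{3}{2\pi(k+1)^2} \;=\; \frac{3}{2\pi}\cdot\frac{\pi^2}{6} \;=\; \frac{\pi}{4}.
\end{equation*}
Hence the total phase deviation of every sender $s\in S_i$ relative to the ideal beamforming phase stays in $[0,\pi/4]$, just as in Eq.~(\ref{eq:phase_shift_range}). From here the $\Re(y)\ge 1$ computation of Lemma~\ref{l:twos} transfers verbatim: using $\dist{\vect{s},\vect{r}}\le 3w_{i+1}\le \frac{1}{\sqrt 2}w_i h_i = \frac{|S_i|}{\sqrt 2}$ and $\cos(\pi/4)=1/\sqrt 2$, one obtains $\text{SNR}\ge |\Re(y)|^2\ge 1$.

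The main obstacle is making the uniform bookkeeping of $\varepsilon_k(s)$ rigorous: one must check that the waiting offsets used in Algorithm~\ref{alg:unicast2} are genuinely the same for every relay in a given column, so that the only discrepancy across a rectangle is indeed the Lemma~\ref{Le:constant_phase_shift} residual, and that these residuals compose linearly (rather than coupling through amplitude) across hops. Once this is verified, the $1/(k+1)^2$ schedule is the natural (and essentially tight) choice that makes the resulting geometric-arithmetic phase budget converge to a constant below $\pi/2$, which is exactly what the SNR argument of Lemma~\ref{l:twos} requires.
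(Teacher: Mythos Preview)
Your proposal is correct and follows essentially the same approach as the paper: bound the per-round phase error via Lemma~\ref{Le:constant_phase_shift} by $\alpha_k = \frac{3}{2\pi(k+1)^2}$, sum these over all hops using $\sum_{k\ge 1}k^{-2}=\pi^2/6$ to get a cumulative deviation of at most $\pi/4$, and then rerun the $\Re(y)\ge 1$ estimate from Lemma~\ref{l:twos} verbatim. The paper's own proof is only a one-line sketch stating exactly this idea (convergent series $\alpha_i=\tfrac{3}{2\pi i^2}$ summing to at most $\pi/4$, rest analogous to Lemma~\ref{l:twos}), so your write-up is in fact more explicit than what the paper provides.
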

} 
\tr{
\begin{lemma} \label{Le:Area_growth_unknown_position}
If the phase errors are not corrected in this routing, then the correct signal can be received if the following inequalities for the dimensions $h_i$ and $w_i$ of the relay rectangles are satisfied.
\begin{eqnarray}
h_{i+1} &\geq& h_i \ , \label{l:eq6}\\
w_{i+1} &\geq& w_i \ , \label{l:eq7}\\
w_{i+1} &\leq& \frac{1}{3\sqrt2} w_i h_i\label{l:eq8} \ , \\
h_i &\le&w_i \label{eq_ratio2} \ , \ \text{and} \\
h_{i}^2 &\leq& \frac{3}{2\pi^2} \frac{1}{(i+1)^2}  \lambda w_{i}\ . \label{l:eq9}
\end{eqnarray}
\end{lemma}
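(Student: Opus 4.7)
The plan is to rerun the proof of Lemma~\ref{Le:Area_growth_known_position} while tracking the phase error that accumulates because Algorithm~\ref{alg:unicast2} omits the correction step of Algorithm~\ref{alg:unicast1}. The geometric inequalities (\ref{l:eq2})--(\ref{l:eqRatio}) are retained verbatim; only the per-hop phase spread bound that previously came from (\ref{l:eq5}) needs a tighter replacement, and this is exactly where (\ref{l:eq9}) enters.

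First I would define $\alpha_s^{(i)}$ to be the deviation of the phase emitted by a sender $s$ in the $i$-th rectangle from the ideal value $e^{j2\pi u_{s,x}/\lambda}$ that Algorithm~\ref{alg:unicast1} would enforce, and set $A_i := \sup_{s}|\alpha_s^{(i)}|$. The one-dimensional self-synchronization argument recalled before Algorithm~\ref{alg:unicast1} gives $A_0=0$ for the initial rectangle. For the inductive step, the received signal at a node $r$ in rectangle $i+1$ equals
\[
y \ = \ \sum_{s}\frac{e^{-j(\beta_{s,r}+\alpha_s^{(i)})}}{\|\vect{s}-\vect{r}\|},
\]
where $\beta_{s,r}$ is the per-hop drift of Lemma~\ref{Le:constant_phase_shift}, bounded under (\ref{l:eq9}) by $\pi h_{i+1}^2/(\lambda w_{i+1}) \le 3/(2\pi(i+2)^2)$. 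Since the arguments of the summands all lie in an arc of width at most $2A_i + 3/(2\pi(i+2)^2)$, the argument of $y$ (which Algorithm~\ref{alg:unicast2} propagates as the new emitted phase $\alpha_r^{(i+1)}$) lies in that same arc, yielding the telescoping recursion $A_{i+1} \le A_i + 3/(2\pi(i+2)^2)$.

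Summing the recursion gives
\[
A_i \ \le \ \frac{3}{2\pi}\sum_{k\ge 2}\frac{1}{k^{2}} \ = \ \frac{3}{2\pi}\!\left(\tfrac{\pi^2}{6}-1\right) \ < \ \frac{\pi}{4},
\]
so the combined offset $\beta_{s,r}+\alpha_s^{(i)}$ of every summand stays strictly bounded away from $\pi/2$. Hence $\cos(\beta_{s,r}+\alpha_s^{(i)})$ is uniformly bounded below by a constant $c>\cos(\pi/4)=1/\sqrt{2}$, and the real-part computation
\[
\Re(y) \ \ge \ \sum_{s}\frac{c}{\|\vect{s}-\vect{r}\|} \ \ge \ \frac{c\cdot w_i h_i}{3w_{i+1}} \ \ge \ 1
\]
goes through via (\ref{l:eq4}) exactly as in Lemma~\ref{Le:Area_growth_known_position}, establishing SNR~$\ge 1$.

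The main obstacle I anticipate is the inductive step $A_{i+1}\le A_i+(\text{per-round spread})$: one must verify that the argument of a sum of complex numbers whose individual arguments all lie in a common arc of width less than $\pi$ itself lies in that arc, and then check that the precise factor $(i+1)^{-2}$ chosen in (\ref{l:eq9}) is exactly what makes $\sum 1/(i+1)^{2}$ converge to a value small enough to keep $A_i$ below $\pi/4$ uniformly in $i$. Everything else is a cosmetic rerun of the known-position argument.
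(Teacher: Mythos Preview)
Your proposal is correct and follows essentially the same route as the paper. The paper's own proof sketch states only that ``the phase shifts in each round form a convergent series $\alpha_i=\frac{3}{2\pi i^{2}}$, such that the sum of all phases $\sum_{i=1}^{r}\alpha_i\le\pi/4$ can be bound,'' after which the argument of Lemma~\ref{Le:Area_growth_known_position} is rerun verbatim; your tracking of the cumulative error $A_i$ via the telescoping recursion and the final real-part estimate implement exactly this, with only cosmetic differences in indexing (you start the sum at $k=2$ because you take $A_0=0$, whereas the paper includes the $i=0$ contribution and obtains equality $\pi/4$ in the limit).
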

} 
The main idea is that the phase shifts in each round form a convergent series $\alpha_i = \frac{3\pi}{2} \cdot\frac{1}{\pi^2 i^2}$, such that the sum of all phases $\sum_{i=1}^{r} \alpha_i \leq \frac{\pi}{4}$ can be bound. \tr{The proof is otherwise analogous to Lemma~\ref{l:twos} and is combined with the proof of Lemma~\ref{Le:Rectangle_recursion_selfsync}.}

The dimensions of these rectangles can be chosen as follows.
\begin{lemma} \label{Le:Rectangle_recursion_selfsync}
The following recursions satisfy equations (\ntr{\ref{l:eq2}-\ref{l:eqRatio},}\tr{\ref{l:eq6}-}\ref{l:eq9}) for $h_0^2= \frac{3}{2\pi^2} \lambda w_0$ for $w_0  \ge\frac{96\pi^2e\cdot c_4}{\lambda}$, and $h_0 \geq 4\sqrt{18}$.
\begin{eqnarray}
w_{i+1}&=&\frac1{\sqrt{12}\pi}\cdot \frac{\sqrt{\lambda}}{i+1} \cdot w_i^{3/2} \label{eq:recursion_width_PhaseConversion} \\
h_{i+1}&=& 18^{-\frac14} \frac{1+i}{2+i} \cdot h_i^{3/2} \ . \label{eq:recursion_height_PhaseConversion}
\end{eqnarray}
The recursions are satisfied by the following equations.
\begin{eqnarray}
w_i &\leq& \left(\frac{\sqrt{\lambda}}{\sqrt{12}\pi}\right)^{2\left(3/2\right)^i-2} \cdot c_2^{-\left(3/2\right)^i} \cdot w_0^{\left(3/2\right)^i} \text{ with } c_2\ge 12.011 \label{eq:width_sync_closedform} \\
w_i &\geq& \left(\frac{\sqrt{\lambda}}{\sqrt{12}\pi}\right)^{2\left(3/2\right)^i-2} \cdot c_3^{-\left(3/2\right)^i} \cdot w_0^{\left(3/2\right)^i} \text{ with } c_3\le  1.58 \label{eq:width_sync_closedform} \\
h_i &=& 18^{\frac{-\left(3/2\right)^i + 1}{2}} \cdot \left(\frac{i+1}{i+2}\right)^{\frac{1}{2}\left(i-1\right)\cdot i} \cdot h_0^{\left(3/2\right)^i} \label{eq:height_sync_closedform}
\end{eqnarray}
\end{lemma}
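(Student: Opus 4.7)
The plan is to split the argument in two: first verify that the proposed recursions (\ref{eq:recursion_width_PhaseConversion}) and (\ref{eq:recursion_height_PhaseConversion}) preserve the five inequalities appearing in Lemma~\ref{Le:Area_growth_unknown_position}, and then unroll them via logarithms to obtain the claimed closed forms.

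For the structural verification, the key observation is that the two recursions are calibrated so that the phase-shift constraint $h_i^2 \le \tfrac{3\lambda w_i}{2\pi^2(i+1)^2}$ is preserved with \emph{equality} at every step. Starting from $h_0^2 = \tfrac{3\lambda w_0}{2\pi^2}$, I would substitute this equality into $\tfrac{1}{3\sqrt{2}} w_i h_i$ and check that the result matches $w_{i+1}$ in (\ref{eq:recursion_width_PhaseConversion}) verbatim (both simplify to $\tfrac{\sqrt\lambda}{\sqrt{12}\pi(i+1)}\, w_i^{3/2}$). A direct computation of $h_{i+1}^2$ using the height recursion and the expression for $w_{i+1}$ then produces exactly $\tfrac{3\lambda w_{i+1}}{2\pi^2(i+2)^2}$, closing the induction. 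Monotonicity $h_{i+1}\ge h_i$ reduces to $h_i \ge \sqrt{18}\bigl(\tfrac{i+2}{i+1}\bigr)^2$, whose worst case $i=0$ matches the hypothesis $h_0 \ge 4\sqrt{18}$; monotonicity $w_{i+1}\ge w_i$ reduces to $w_i \ge 12\pi^2(i+1)^2/\lambda$, guaranteed at $i=0$ by $w_0 \ge 96\pi^2 e\cdot c_4/\lambda$ for a suitably chosen $c_4$ and propagated for $i\ge 1$ by the double-exponential growth of $w_i$. The aspect ratio $h_i\le w_i$ drops out of the preserved equality together with this lower bound on $w_i$.

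For the closed-form derivation, taking $\log$ of the width recursion yields $\log w_{i+1} = \tfrac{3}{2}\log w_i + C - \log(i+1)$ with $C := \log\bigl(\sqrt{\lambda}/(\sqrt{12}\pi)\bigr)$, and unrolling gives $\log w_i = (3/2)^i\log w_0 + C\bigl(2(3/2)^i-2\bigr) - S_i$, where $S_i := \sum_{k=0}^{i-1}(3/2)^{i-1-k}\log(k+1)$. The $C$-term produces the factor $(\sqrt{\lambda}/(\sqrt{12}\pi))^{2(3/2)^i-2}$ exactly. The residual $S_i = (3/2)^{i-1}\sum_{k=0}^{i-1}(3/2)^{-k}\log(k+1)$ is sandwiched between $0$ and $(3/2)^{i-1}\cdot C_0$ with $C_0 := \sum_{k\ge 0}(3/2)^{-k}\log(k+1)<\infty$, which yields the envelopes $c_2^{-(3/2)^i}$ and $c_3^{-(3/2)^i}$ after absorbing the finite $(2/3)$-factor; the explicit numerical values $c_2 \ge 12.011$ and $c_3 \le 1.58$ come from evaluating $C_0$ and the tightest finite partial sum. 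The height recursion is handled identically with $-\tfrac{1}{4}\log 18$ in place of $C$ and $\log\tfrac{k+1}{k+2}$ in place of $-\log(k+1)$: the constant part telescopes to $18^{(1-(3/2)^i)/2}$ and the harmonic part bounds term-by-term to the envelope $\bigl(\tfrac{i+1}{i+2}\bigr)^{i(i-1)/2}$. I expect the main obstacle to be the numerical bookkeeping in this final step: the structural identities fall out of the design, but pinning down the specific constants $12.011$, $1.58$ and the exact exponent $i(i-1)/2$ demands sharp one-sided estimates of the weighted logarithmic sums rather than any conceptual insight.
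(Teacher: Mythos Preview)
Your plan is essentially the same as the paper's proof. The paper also first shows that the recursions arise by combining the reach constraint $w_{i+1}=\tfrac{1}{3\sqrt2}w_ih_i$ with the phase constraint $h_i^2=\tfrac{3\lambda w_i}{2\pi^2(i+1)^2}$ taken with equality (your ``key observation''), then verifies the five inequalities one by one, and finally unrolls the recursions: where you take logarithms and sum $\sum_k (3/2)^{-k}\log(k+1)$, the paper writes the same thing multiplicatively as $\prod_k(2+i-k)^{-(3/2)^{k-1}}$ and bounds it by $2^{c_4}$ with $c_4=\sum_{u\ge 0}\log(2+u)/(3/2)^{u+1}\approx 3.586$, whence $c_2\approx 12.011$, and reads off $c_3$ from the first partial sum. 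One small caution: you describe the $h_i$ closed form as obtained by a term-by-term \emph{bound}, whereas the lemma states it as an \emph{equality}; the paper's derivation here collapses the varying factors $\tfrac{k+1}{k+2}$ into the single fraction $\tfrac{i+1}{i+2}$ raised to $\tfrac{i(i-1)}{2}$, which is not literally an identity, so your instinct to treat it as an envelope is actually the more honest reading.
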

Remember that we reach the constant length $w_0$ in a logarithmic number of rounds and therefore $\log_{3/2}\left(w_0\cdot\lambda\right) = 25 $ for a moderate expansion\tr{ with basis $3/2$}.
\ntr{The lengthy proofs of Lemma~\ref{Le:Area_growth_unknown_position} and \ref{Le:Rectangle_recursion_selfsync} are omitted here and can be found in \cite{JS14_Beamforming_LogLog_TR}.}

\tr{
\begin{proof}
The recursions follow from combining Inequality~(\ref{l:eq8}) with~(\ref{l:eq9}).
\begin{eqnarray*}
w_{i+1}
&\stackrel{(\ref{l:eq8})}{=}& \frac{1}{3\sqrt2} w_i h_i
\stackrel{(\ref{l:eq9})}{=} \frac{1}{3\sqrt2} \left(\frac{3}{2\pi^2}\right)^{1/2} \frac{\sqrt{\lambda}}{i+1} w_i^{3/2} 
= \frac{1}{\sqrt{12}\pi} \cdot \frac{\sqrt{\lambda}}{i+1} \cdot w_i^{3/2} \\
h_{i+1} 
&\stackrel{(\ref{l:eq9})}{=}& \sqrt{\frac{3\lambda}{2\pi^2}}\frac{1}{i+1} w_{i+1}^{1/2} 
\stackrel{(\ref{l:eq8})}{=} \sqrt{\frac{3\lambda}{2\pi^2}} \frac{1}{i+1} \left(\frac{w_ih_i}{18^{1/2}} \right)^{1/2} 
\stackrel{(\ref{l:eq9})}{=} 18^{-\frac{1}{4}} \cdot \frac{i+1}{i+2} \cdot h_i^{3/2} 
\end{eqnarray*}
The equations (\ref{l:eq6}-\ref{l:eq9}) can be proven as follows:
\begin{enumerate}
\item[(\ref{l:eq6})]: 
To prove $h_i \leq h_{i+1}$ we insert $h_0$ into Eq.~(\ref{eq:recursion_height_PhaseConversion}) $$h_1 = 18^{-\frac14} \frac{1}{2} \cdot \left(4\sqrt{18}\right)^{3/2} = 4\sqrt{18} = h_0$$
Both factors $h_i^{3/2}$ and $\frac{1+i}{2+1}$ are monotonous increasing, in particular the derivation of the latter is $\left(i+2\right)^{-2}$ which is positive for $i\ge0$. Thus, if $h_1 = h_0$ then it holds that $h_{i+1} \ge h_i$.
\item[(\ref{l:eq7})]: To proof $w_{i+1} \ge w_i$ let us substitute $c_1:=\frac{\sqrt{\lambda}}{\sqrt{12}\pi}$ in Eq.~(\ref{eq:recursion_width_PhaseConversion}).
\begin{eqnarray*}
w_i &=& \frac{c_1}{i+1} \cdot w_{i-1}^{3/2}
\end{eqnarray*}
Here are the first values of $w_i$:
\begin{eqnarray*}
w_0 \\
w_1 &=& \frac{c_1}{2} \cdot w_0^{3/2} \\
w_2 &=& \frac{c_1}{3} \cdot \frac{c_1^{3/2}}{2^{3/2}} \cdot w_0^{\left(3/2\right)^2} \\
w_3 &=& \frac{c_1}{4} \cdot \frac{c_1^{3/2}}{3^{3/2}} \cdot \frac{c_1^{\left(3/2\right)^2}}{2^{\left(3/2\right)^2}} \cdot w_0^{\left(3/2\right)^3} \\
w_4 &=& \frac{c_1}{5} \cdot \frac{c_1^{3/2}}{4^{3/2}} \cdot \frac{c_1^{\left(3/2\right)^2}}{3^{\left(3/2\right)^2}} \cdot \frac{c_1^{\left(3/2\right)^3}}{2^{\left(3/2\right)^3}} \cdot w_0^{\left(3/2\right)^4} \\
\end{eqnarray*}
A closed-form solution for Equation~(\ref{eq:recursion_width_PhaseConversion}) is
\begin{eqnarray}
w_i &=& c_1^{\sum_{k=1}^i\left(3/2\right)^{k-1}} \cdot \prod\limits_{k=1}^i\left(2+i-k\right)^{-\left(3/2\right)^{k-1}} \cdot w_0^{\left(3/2\right)^i} \nonumber\\
w_i &\le& c_1^{2\left(3/2\right)^i-2} \cdot c_2^{-\left(3/2\right)^i} \cdot w_0^{\left(3/2\right)^i} \label{eq:width_sync_closedform}
\end{eqnarray}
for a constant $c_2$ fulfilling the inequation
\begin{eqnarray*}
c_2^{-\left(3/2\right)^i}
&\ge& \prod\limits_{k=1}^i\left(2+i-k\right)^{-\left(3/2\right)^{k-1}} \\
&=& 2^{\sum_{k=1}^i -\left(3/2\right)^{k-1}\cdot\log\left(2+i-k\right)} \\
&=& 2^{\sum_{k=1}^i -\left(3/2\right)^{k-i-1}\cdot\log\left(2+i-k\right) \cdot \left(3/2\right)^i} \ .
\end{eqnarray*}
When substituting $u:=i-k$ we get
\begin{eqnarray*}
c_2^{-\left(3/2\right)^i}
&\ge& 2^{\sum_{u=0}^{i-1} -\left(3/2\right)^{-u-1}\cdot\log\left(2+u\right) \cdot \left(3/2\right)^i}\\
&\ge& \left(2^{c_4}\right)^{-\left(3/2\right)^i}
\end{eqnarray*}
where $c_2 = 2^{c_4}$ can be upper-bounded with
\begin{eqnarray*}
c_4 &=& \sum_{u=0}^{\infty} \frac{\log\left(2+u\right)}{\left(3/2\right)^{u+1}}
\end{eqnarray*}
which converges to $c_4 \approx 3.586$ and we get for the constant
\begin{eqnarray}
c_2 &=& 2^{c_4} = 12.011\ldots
\end{eqnarray}
\begin{figure}
	\begin{center}
	\includegraphics[scale=1.2]{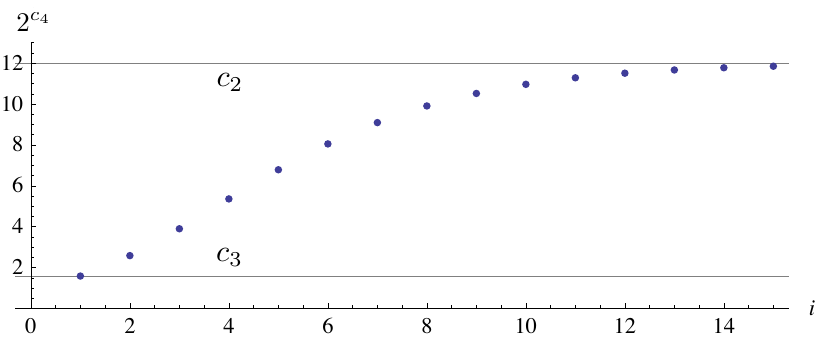}
	\end{center}
	\caption{Constant $c_4$ in approximated closed solution of in Eq.~(\ref{eq:width_sync_closedform}) for width $w_i$.}
	\label{fig:Unicast_Recursion_SelfSync_Width_Constant_Plot}
\end{figure}
Figure~\ref{fig:Unicast_Recursion_SelfSync_Width_Constant_Plot} shows a plot of $2^{c_4}$ for recursion step $i$ with convergence to $c_4$ and a lower bound is also marked for $i=1$ with
\begin{eqnarray}
c_3 &\le& 1.58 \ .
\end{eqnarray}
To satisfy $w_{i+1} \ge w_i$ we have to assure in Eq.~(\ref{eq:width_sync_closedform}) that the initial value $w_0$ compensates from the beginning the limiting factors with $c_1$ and $c_2$. Thus,
\begin{eqnarray}
w_0 &\ge& \frac{c_2}{c_1^2} = \frac{12\pi^2\cdot c_2}{\lambda} \text{ for a constant } c_2 = 12.011
\end{eqnarray}

\item[(\ref{eq:height_sync_closedform})]: Here are the first values from the recursion of the height in Eq.~(\ref{eq:recursion_height_PhaseConversion}) with the constant $c_6 = 18^{-1/4}$
\begin{eqnarray*}
h_0 \\
h_1 &=& \frac{2}{3}c_6 \cdot h_0^{3/2}\\
h_2 &=&\frac{3}{4}c_6 \cdot \left(\frac{2}{3}\right)^{3/2}c_6^{3/2} \cdot h_0^{\left(3/2\right)^2}\\
h_3 &=&\frac{4}{5}c_6 \cdot  \left(\frac{3}{4}\right)^{2/3}c_6^{2/3} \cdot \left(\frac{2}{3}\right)^{\left(3/2\right)^2}c_6^{\left(3/2\right)^2} \cdot h_0^{\left(3/2\right)^3}\\
\end{eqnarray*}
The closed-form solution for $h_i$ is therefore
\begin{eqnarray*}
h_i 
&=& c_6^{\sum_{k=0}^{i-1}\left(2/3\right)^k} \cdot \prod\limits_{k=1}^i \left(\frac{i+1}{i+2}\right)^{i-k} \cdot h_0^{\left(3/2\right)^i} \\
&=& c_6^{2\left(3/2\right)^i-2} \cdot \left(\frac{i+1}{i+2}\right)^{\frac{1}{2}\left(i-1\right)i} \cdot h_0^{\left(3/2\right)^i} \ .
\end{eqnarray*}


%
\item[(\ref{l:eq8})]:  $w_{i+1} \leq \frac{1}{3\sqrt2} w_i h_i$

Now $h_0 = \sqrt{\frac{3\lambda}{2\pi^2} w_0}$, which implies
$ w_0 h_0 = \sqrt{\frac{3\lambda}{2\pi^2}} w_0^{3/2}$.
Inserting $w_{i+1}$ of Equation~(\ref{eq:recursion_width_PhaseConversion}) gives
\begin{eqnarray*}
\frac1{\sqrt{12}\pi}\cdot \frac{\sqrt{\lambda}}{i+1} \cdot w_i^{3/2} &\le& \frac{1}{3\sqrt2} w_i h_i \\
\Rightarrow \frac{1}{\pi^2}\frac{\lambda}{\left(i+1\right)^2} \cdot w_i &\le& h_i^2 \text{ and replacing $h_i^2$ with Eq.~(\ref{l:eq9})} \\
\Rightarrow w_i &\le& \frac{3}{2}  w_{i} \Rightarrow \text{ true.}
\end{eqnarray*}
%
\item[(\ref{eq_ratio2})] We can show the inequation $h_i \le w_i $ by comparing the closed solutions of $w_i$ and $h_i$. Insertion of the lower bound for $w_0$ in Eq.~(\ref{eq:width_sync_closedform}) gives
\begin{eqnarray}
w_i &\ge& \left(\frac{\lambda}{12\pi^2}\right)^{\left(3/2\right)^i-1} \cdot c_2^{-\left(3/2\right)^i} \cdot \left(\frac{8e\cdot 12\pi^2\cdot c_2}{\lambda}\right)^{\left(3/2\right)^i} \nonumber \\
&=& \frac{12\pi^2}{\lambda} {\left(2e\right)}^{\left(3/2\right)^i} \cdot 4^{\left(3/2\right)^i}  \label{eq:closed_solution_width_selfsync}
\end{eqnarray}
Insertion of $h_0$ into Eq.~(\ref{eq:height_sync_closedform}) gives
\begin{eqnarray}
h_i &=& \sqrt{18}^{-\left(3/2\right)^i + 1} \cdot \left(\frac{i+1}{i+2}\right)^{\frac{1}{2}\left(i-1\right)\cdot i} \cdot \left(4\cdot \sqrt{18}\right)^{\left(3/2\right)^i}
= \sqrt{18} \cdot 4^{\left(3/2\right)^i}  \label{eq:closed_solution_height_selfsync}
\end{eqnarray}
For $w_i \ge h_i$ it follows $\lambda \le 8\cdot \pi^2 \cdot c_5^{3/2} \approx 353$ which is true.

\item[(\ref{l:eq9})] For proving $h_{i}^2 \leq \frac{3}{2\pi^2} \frac{1}{(i+1)^2}  \lambda w_{i}$, we insert the the closed solutions of $w_i$ (Eq.~(\ref{eq:closed_solution_width_selfsync})) and $h_i$ (Eq.~(\ref{eq:closed_solution_height_selfsync})).
\begin{eqnarray*}
18 \cdot 4^{2\left(3/2\right)^i} &\le& \frac{3}{2\pi^2} \cdot \frac{1}{\left(i+1\right)^2} \lambda \cdot \frac{12\pi^2}{\lambda} \cdot \left(8c_5\right)^{\left(3/2\right)^i} \\
\Rightarrow \left(i+1\right)^2 &\le& c_5^{\left(3/2\right)^i} \\
\Rightarrow 2 \cdot \log\left(i+1\right) &\le& \left(\frac{3}{2}\right)^i \cdot \log c_5 \\
\Rightarrow \frac{\left(3/2\right)^{i+1}}{\log\left(i+1\right)} &\ge& \frac{4\log c_5}{3} \\
\Rightarrow \left(i+1\right) \log\left(3/2\right) - \log\log\left(i+1\right) &\ge& \log\left(4/3\right) + \log\log\left(c_5\right) \\
\Rightarrow i \cdot \log\left(3/2\right) - \log\log\left(i+1\right) &\ge& \log\left(8/9\right) + \log\log\left(c_5\right)
\end{eqnarray*}
This inequality cannot be solved for $i$ in closed form and therefore we compute a lower bound for the constant $c_5$ by analyzing the following function:
\begin{eqnarray}
\frac{c_5^{\left(3/2\right)^i}}{ \left(i+1\right)^2} &\ge& 1 \label{eq_lowerbound_c5}
\end{eqnarray}
and setting the derivation for variable $i$  equals zero gives
\begin{eqnarray*}
0 &=&\frac{\left(3/2\right)^i \cdot c_5^{\left(3/2\right)^i}\cdot \log\left(3/2\right)\log\left(c_5\right)}{\left(i+1\right)^2} - \frac{2c_5^{\left(3/2\right)^i}}{\left(i+1\right)^3}\\
\frac{2}{\left(i+1\right)} &=& \left(3/2\right)^i \cdot \log\left(3/2\right)\log\left(c_5\right) \\
i_0 &=& \frac{\text{ProductLog}\left(\frac{3}{\log\left(c_5\right)}\right)-\log\left(3/2\right)}{\log\left(3/2\right)} \\
\end{eqnarray*}
Here, the function $\text{ProductLog}\left(x\right) = w$ is the inverse function of $x \mapsto w e^w$. The value  $i_0$ is the location of the minimum and to find the minimum it needs to be substituted into Inequality~(\ref{eq_lowerbound_c5}).  So, this  implies
\begin{eqnarray}
c_5 &\ge& e
\end{eqnarray}
since $\frac{c_5^{\left(3/2\right)^{i_0}}}{ \left(i_0+1\right)^2} = 1.00208 > 1$.
\end{enumerate}
\end{proof}
} 

It remains to show how to inform the first rectangle. \tr{For this, we use the broadcast algorithm of  \cite{JS13_Beamforming_Line}.}

\begin{lemma}
A start phase of $\OO\left(-\log\lambda\right)$ rounds allows to inform an initial area of nodes with 
$w_0 > \frac{72}{\lambda}$, $h_0 \geq \sqrt{18}$, $h_0^2 \leq \frac14 \lambda w_0$, and $h_0 \le w_0$.
\end{lemma}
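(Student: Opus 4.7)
The plan is to reduce the problem to the already-established one-dimensional broadcast algorithm of \cite{JS13_Beamforming_Line} and then extend the informed line into a very short rectangle. First I note what the algebra of the four constraints actually forces: combining $h_0 \geq \sqrt{18}$ with $h_0^2 \leq \frac14 \lambda w_0$ gives $w_0 \geq 72/\lambda$, so in the relevant regime the initial rectangle has width $w_0 = \Theta(1/\lambda)$ and height $h_0$ that is merely a constant (effectively $\sqrt{18}$). Since the source alone can only reach its unit-distance neighbors, the task is to grow an informed set of $\Theta(1/\lambda)$ collinear nodes, plus a constant number of adjacent rows.

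First I would invoke the line broadcast algorithm of \cite{JS13_Beamforming_Line}, applied along the row through the source. That algorithm achieves double-reach-per-round through cooperative beamforming and on-the-fly self-synchronization, so it informs $m$ collinear grid nodes in $\OO(\log m)$ rounds. Instantiating $m = \lceil w_0 \rceil = \Theta(1/\lambda)$ yields $\OO(\log(1/\lambda)) = \OO(-\log \lambda)$ rounds, which is exactly the round complexity claimed by the lemma. Crucially, the self-synchronization inherent to that algorithm leaves the informed line in a phase-coherent state compatible with the start of the rectangle-based phase.

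Second I would extend the informed line into the $w_0 \times h_0$ rectangle. Because $h_0$ is a constant (at most $\sqrt{18} < 5$), the $\lceil h_0 \rceil - 1$ additional grid rows above the source row can be informed in $\OO(1)$ additional rounds: each new row lies at unit distance from an already-informed, long row of cooperating beamformers, so the beamforming gain from the line amply exceeds the SNR threshold $\tau = 1$. The phase-shift bound needed for the next round to proceed cleanly is exactly the hypothesis $h_0^2 \leq \frac14 \lambda w_0$, which is the $\alpha = \pi/4$ instance of Lemma~\ref{Le:constant_phase_shift}; so the transition into the growth recursion of Lemma~\ref{Le:Area_growth_known_position} (or Lemma~\ref{Le:Area_growth_unknown_position}) is immediate.

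The main obstacle I anticipate is reconciling the synchronization assumption at the hand-off between the two stages. The line algorithm's self-synchronization implicitly uses the fact that propagation delay along the line matches the reception pattern; once we add vertical displacement, a receiver at $(u_x, u_y)$ is at distance $\sqrt{(u_x - s_x)^2 + u_y^2}$ from a sender on the source row, which introduces exactly the residual phase skew estimated in Lemma~\ref{Le:constant_phase_shift}. Because we have chosen $h_0$ so that this skew is at most $\pi/4$, the same real-part argument used in the proof of Lemma~\ref{l:twos} applies, certifying that every node in the rectangle decodes correctly and emerges synchronized for the first $w_0 \times h_0 \to w_1 \times h_1$ step. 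Summing, the bootstrap cost is $\OO(-\log\lambda) + \OO(1) = \OO(-\log\lambda)$, as claimed.
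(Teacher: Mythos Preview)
Your overall strategy matches the paper's---bootstrap via the line broadcast of \cite{JS13_Beamforming_Line}, then pass to a rectangle---but the extension step has a real gap. You propose to fill in the $\lceil h_0\rceil-1$ rows above the source row in $\OO(1)$ rounds by treating the informed line as ``cooperating beamformers'' toward each adjacent row. But the line algorithm leaves the nodes synchronized for beamforming \emph{along the $x$-axis}; a receiver at $(u_x,1)$ directly above the line is orthogonal to that beam. The path-length differences $\sqrt{(u_x-s_x)^2+1}-|u_x-s_x|$ to nearby senders are of order~$1$, hence carrier-phase offsets of order $2\pi/\lambda \gg \pi$ for $\lambda\le 1/2$, so the contributions do not add constructively and may cancel. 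Nor can you fall back on ``the unit-distance neighbor below suffices'': when the whole line transmits simultaneously, the interfering contributions from the other columns can drive $|y|$ below~$\tau$. Your invocation of Lemma~\ref{Le:constant_phase_shift} is also misplaced here: that lemma requires the sender to sit at distance at least~$w$ from the target rectangle, whereas your line is the bottom row of it.

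The paper sidesteps this with a different geometry: it runs the line broadcast out to $m=8w_0$ nodes (not $w_0$), and then in one further hop has that line beamform \emph{forward} to a $w_0\times h_0$ rectangle placed to the right of the line, i.e.\ inside the main beam. Now every sender is at distance at least $w_0$ from every receiver in the rectangle, so Lemma~\ref{Le:constant_phase_shift} with $\alpha=\pi/4$ legitimately bounds the phase skew; the resulting $1/\sqrt{2}$ attenuation still clears the threshold, and the rectangle is both informed and column-synchronized in one step. The round count remains $\OO(\log(8w_0))=\OO(-\log\lambda)$.
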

\ntr{\vspace{-1.5em}}
\begin{figure}[hbt]
	\begin{center}
	\includegraphics[scale=0.6]{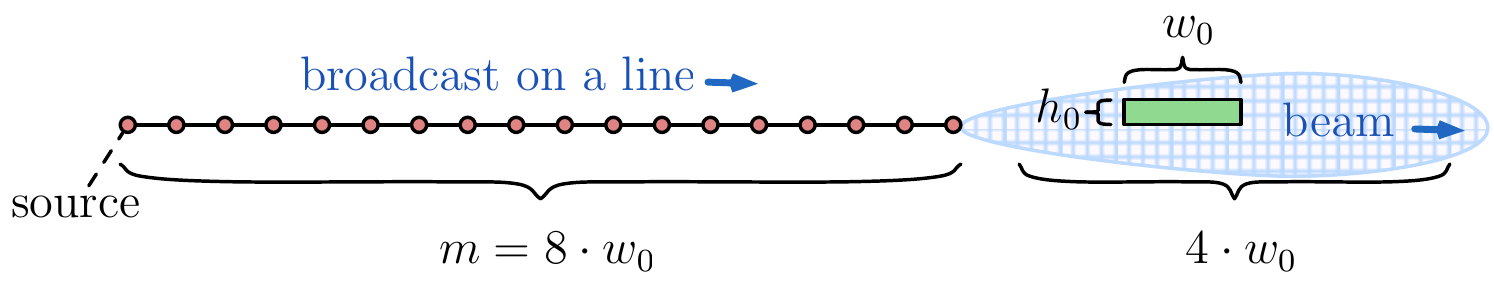}
	\end{center}
	\ntr{\vspace{-1em}}
	\caption{\tr{In an initial phase, a broadcast on a line with $m=8w_0$ nodes is performed followed by a last hop of cooperative beamforming from the line of senders to the first rectangle with dimensions $w_0\times h_0$.}\ntr{Initial phase with a broadcast on the line to $m=8w_0$ nodes followed by a last hop of cooperative beamforming to the first rectangle with dimensions $w_0\times h_0$}}
	\label{fig:Initial_Phase_Broadcast}
\end{figure}

\begin{proof}
%
%
%
To inform the first rectangle with dimensions $w_0 \times h_0$, we first inform $8w_0$ subsequent nodes placed on a line which together can inform and synchronize all nodes  in the first rectangle with cooperative beamforming (compare Fig.~\ref{fig:Initial_Phase_Broadcast}).
To initially inform a line of $m=8w_0$ senders, we use the exponential broadcast algorithm of \cite{JS13_Beamforming_Line}, which informs $m$ nodes placed on a line in $\OO(\log m)$ rounds. Note that the exponential broadcast algorithm has informed at least  $(3/2)^i$ nodes after round $i$.\tr{ We choose $m$ large enough that this line can inform a rectangle of dimensions $w_0 \times h_0$ in distance $w_0$.} We choose $m=8w_0$ which results in a runtime $k\cdot\log\left(\frac{8\cdot72}{\lambda}\right)=\OO\left(-\log\lambda\right)$ rounds for some constant $k$. 
Then, $8w_0$ nodes are in phase to inform not only the next $4w_0$ nodes on the line but also all other nodes in the beam including a rectangle with dimensions $w_0 \times h_0$. However, there will be a phase shift  for the nodes of the rectangle, which are not on the line. By Lemma~1 this offset attenuates the signal by a factor of at most $\frac1{\sqrt2}$. Therefore, all nodes of this initial rectangle receive the message.
\ntr{We can compute the delay error for each node with an invariant of Equation~(\ref{eq:rectangle_delay}) where the sender area is reduced to a line.
}\tr{Analogous to Equation~(\ref{eq:rectangle_delay}), we can compute the delay error for each node in the first rectangle placed at $\vect{r}=\left(r_x,r_y\right)$ with
\begin{eqnarray}
\psi\left(i,\vect{r}\right) &=& \frac{1}{f} + \frac{1}{2\pi f} \arg[\sum_{x=0}^{8w_0-1} \frac{e^{-j2\pi\left(\sqrt{\left(x-r_x\right)^2+r_y^2}-r_x\right)/\lambda}}{\sqrt{\left(x-r_x\right)^2+r_y^2}}  ] \ . \label{eq:line_delay}
\end{eqnarray}
} 
\end{proof}


The above lemmas lead to our main result of  $\OO(\log \log n)$ unicast.

\begin{theorem} \label{t-one} \label{t:one}
Given $n$ nodes in a grid equipped with a trans\-cei\-ver with wavelength $\lambda\le \frac{1}{2}$, placed within unit distance and possessing a transmission power only to reach each neighbor, any node can send a message to any other node in $\OO(\log \log n - \log \lambda)$ rounds.
\end{theorem}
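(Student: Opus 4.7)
The proof will stitch together the initial-phase lemma with the double-exponentially growing relay rectangles of Lemma~\ref{le:rectangle_dim}, and finally handle arbitrary source/target positions.

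First, I would set up coordinates so that the source $s$ and target $t$ both lie on a line segment of length at most $\sqrt2\cdot\sqrt n$, oriented along what we call the $x$-axis; by symmetry, the runtime analysis for the middle-of-grid case carries over up to a constant factor. Starting at $s$, the initial-phase lemma informs a line of $8w_0$ nodes in $\OO(-\log\lambda)$ rounds and then, via one cooperative beamforming hop, seeds the first $w_0\times h_0$ rectangle with $w_0 = \Theta(1/\lambda)$ and $h_0 = \Theta(\sqrt{\lambda w_0}) = \Theta(1)$, satisfying the boundary constraints required by Lemma~\ref{le:rectangle_dim}.

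Next, I would iterate the multi-hop step of Lemma~\ref{Le:Area_growth_known_position}, using the closed-form dimensions from Lemma~\ref{le:rectangle_dim}: after $i$ relay rounds, the rectangle width satisfies
\begin{equation*}
w_i \;=\; \frac{72}{\lambda}\,\bigl(\tfrac{\lambda}{72}w_0\bigr)^{(3/2)^i}.
\end{equation*}
Writing $a := \tfrac{\lambda}{72}w_0 \geq 1$ (which holds because $w_0\ge 72/\lambda$), we need to choose the smallest $i^*$ so that $\sum_{k=0}^{i^*} w_k \ge \sqrt n$; since the geometric-style series is dominated by its last term, this reduces to $w_{i^*}=\Omega(\sqrt n)$, i.e.\ $(3/2)^{i^*}\log a \;\ge\; \log\bigl(\lambda\sqrt n/72\bigr)$, which yields $i^* = \OO(\log\log n)$ whenever $a$ is a constant $>1$ (which we can ensure by fixing $w_0$ slightly larger than the lemma's minimum). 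Delivery to $t$ is guaranteed because Lemma~\ref{Le:Area_growth_known_position} ensures every node inside a $w_{i+1}\times h_{i+1}$ rectangle at distance $w_{i+1}$ achieves $\mathrm{SNR}\ge 1$; when the advancing rectangle finally contains $t$, the target receives the message in that same round.

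Adding the two phases, the total round count is $\OO(-\log\lambda) + \OO(\log\log n) = \OO(\log\log n - \log\lambda)$, as claimed. Two small details remain that I would address explicitly. First, when the source–target line is close to the grid boundary, the rectangle cannot be centered on the line of sight; as noted after Lemma~\ref{l:twos}, expanding the height only toward the interior still satisfies inequalities (\ref{l:eq2})--(\ref{l:eq5}) because those bounds involve only $h_i,w_i$, not symmetry. Second, the far-field assumption $\dist{\vect{u_i},\vect{v}} > 2\lambda$ must hold at every hop; this is automatic because $w_i\ge w_0\ge 72/\lambda \ge 144 > 2\lambda$ for $\lambda \le 1/2$. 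The main obstacle in a fully rigorous write-up is verifying that the unit-power per-node constraint is preserved throughout: each relay transmits with $|x_i|\le 1$, and the beamforming gain—rather than extra power—is what extends the range, which is precisely the content of Lemma~\ref{Le:Area_growth_known_position}.
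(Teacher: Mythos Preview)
Your round-count analysis is correct and more explicit than the paper's: the $\OO(-\log\lambda)$ start-up plus the $\OO(\log\log n)$ double-exponential phase, with $w_{i^*}=\Omega(\sqrt n)$ forcing $i^*=\OO(\log\log n)$, is exactly the intended arithmetic.

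Where you diverge from the paper, and where there is a genuine gap, is in the reduction from arbitrary $(s,t)$ to the axis-aligned case. You propose to rotate coordinates so that the $s$--$t$ segment becomes ``the $x$-axis'' and then appeal to symmetry. But the grid is not rotationally symmetric, and the lemmas you invoke rely on that axis alignment in two concrete ways. First, the initial phase uses the line-broadcast of \cite{JS13_Beamforming_Line}, which needs a row of grid nodes at unit spacing; a generic line through $s$ contains no further grid points, and even a rational-slope line has spacing $\neq 1$. Second, Lemma~\ref{Le:Area_growth_known_position} counts exactly $|S|=w_ih_i$ senders in an axis-aligned box and synchronizes them column-by-column via the grid $x$-coordinate; a tilted rectangle breaks both the node count and the ``all nodes of a vertical column have the same phase'' premise. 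So ``by symmetry'' does not go through.

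The paper sidesteps this entirely: it routes first along the grid $x$-axis until the target's $x$-coordinate is reached, then relaunches the whole construction orthogonally along the $y$-axis. Each leg is axis-aligned, the lemmas apply verbatim, and the cost is only a factor of~$2$ in rounds. (Tilted rectangles do appear later in the paper, for the speed-of-light result, but that argument needs a different start-up and a separate justification that boundary effects are negligible---none of which you have supplied.) Replacing your first paragraph with the $x$-then-$y$ decomposition fixes the proof.
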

\begin{proof}
The basic idea is, first to route on the $x$-axis until the correct $y$-coordinate has been reached and then to relaunch the algorithm orthogonally on the $y$-axis. Then, the claim follows by the above lemmas. 
%
%
%
%
%
%
\end{proof}

The energy is given by the sum of sending nodes, i.e. $\sum_{i=1}^r w_i h_i$ for $r$ rounds, since each node sends with constant energy.
Now, $w_i h_i = \OO(w_{i+1})$, where $w_{r+1}= \OO(d)$ and $w_i$ grows double exponentially. So, for the sum of transmission energy the last term asymptotically bounds the sum.
\begin{corollary}
The overall transmission energy consumed by the $\OO(\log \log n)$ unicast algorithm for sending a message over distance $d$ is $\OO(d)$.
\end{corollary}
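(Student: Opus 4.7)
Each cooperating sender transmits with bounded power, so the per-round energy equals, up to a universal constant, the area $w_i h_i$ of the round-$i$ relay rectangle. Summing over the $r=\OO(\log\log n-\log\lambda)$ rounds of beamforming hops and adding the contribution of the $\OO(-\log\lambda)$ rounds of the initial line-broadcast phase, the total energy is
\begin{equation*}
E \;=\; \OO\!\left(\sum_{i=0}^{r-1} w_i h_i\right)\;+\;E_{\mathrm{init}},
\end{equation*}
where $E_{\mathrm{init}}=\OO(w_0\log w_0)$ bounds the initial phase (a line of $8w_0$ active senders over $\OO(\log w_0)$ rounds).

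The first real step is to exploit the tight rectangle recursion of Lemmas~\ref{le:rectangle_dim} and~\ref{Le:Rectangle_recursion_selfsync}: the dimensions are designed so that $w_{i+1}=\Theta(w_i h_i)$ — quantitatively the chosen recursions give $w_i h_i=\sqrt{18}\,w_{i+1}$. Substituting this collapses the energy sum to a sum of rectangle widths,
\begin{equation*}
\sum_{i=0}^{r-1} w_i h_i \;=\; \Theta\!\left(\sum_{i=1}^{r} w_i\right).
\end{equation*}

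Next I would observe that each hop advances the chain by exactly $w_{i+1}$, so the total traversed distance satisfies $d=\Theta(\sum_{i=1}^{r}w_i)$, hence $E=\Theta(d)+E_{\mathrm{init}}$. The double-exponential growth $w_{i+1}=\Theta(w_i^{3/2})$ in Lemma~\ref{le:rectangle_dim} implies $w_{i+1}/w_i\to\infty$, so the widths form a strongly super-geometric sequence and both $\sum_{i=1}^{r}w_i$ and $d$ are $\Theta(w_r)$, which confirms the collapse. The only mild obstacle is verifying cleanly that the constant-indexed head of the sum (before the super-geometric regime dominates) together with $E_{\mathrm{init}}$ are absorbed into $\OO(d)$; but that head has $\OO(1)$ terms of size depending only on $\lambda$, and $E_{\mathrm{init}}$ likewise depends only on $\lambda$, so for any non-trivial $d$ both contributions are swallowed by the leading $\Theta(d)$ term, while for $d=\OO(1)$ the statement is immediate.
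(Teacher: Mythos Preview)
Your argument is correct and matches the paper's own reasoning almost exactly: the paper also observes that $w_i h_i = \OO(w_{i+1})$, that $w_{r+1}=\OO(d)$, and that double-exponential growth makes the last term dominate the sum. Your treatment is in fact slightly more careful, since you explicitly account for the initial line-broadcast energy $E_{\mathrm{init}}$ and the constant-indexed head of the sum, both of which the paper leaves implicit.
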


Now, we apply this observation to randomly placed nodes in the grid. First, we establish a bound on the minimum number of nodes in some area.

\begin{lemma}\label{le:random_node_placement}
Given $n$ nodes randomly distributed in a square of area $n$ with transmission range $k \sqrt{\log n}$ for some constant $k $.
In every geometric object inside a square of an area of at least $k^2 \log n$ lie at least $\log n$ nodes with high probability, i.e. $1-n^{-\ell}$ for some constant $\ell$.
\end{lemma}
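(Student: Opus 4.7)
The plan is a standard concentration-plus-union-bound argument. Since the $n$ nodes are placed independently and uniformly at random in the area-$n$ square, for any \emph{fixed} region $R$ of area $A$ the count $X_R$ of nodes falling in $R$ is a sum of $n$ i.i.d.\ Bernoulli random variables with success probability $A/n$, so $\E{X_R}=A$. For $A=k^2\log n$ the Chernoff lower tail bound
\[
\Prob{X_R \le (1-\delta)\E{X_R}} \;\le\; \exp\!\left(-\delta^2 \E{X_R}/2\right)
\]
applied with $\delta = 1 - 1/k^2$ yields $\Prob{X_R < \log n} \le n^{-c(k)}$, where the exponent $c(k)$ grows without bound as the constant $k$ is increased. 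Thus for any single region of the prescribed area the desired count bound already fails with only polynomially small probability.

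To lift this from a fixed region to \emph{every} region under consideration I would discretize. Overlay a fine auxiliary grid of spacing $1/n$ on the $\sqrt{n}\times\sqrt{n}$ square and restrict attention to the $n^{\OO(1)}$ ``canonical'' axis-aligned rectangles whose corners lie on grid points. Any axis-aligned rectangle $R$ inside the big square of area at least $k^2\log n$ contains a canonical sub-rectangle $R'\subseteq R$ of area at least $k^2\log n - \OO(1)$; by choosing $k$ slightly larger, $R'$ still clears the Chernoff threshold used above, and hence $|R|\ge |R'|\ge \log n$ follows from the bound applied to $R'$. A union bound over the polynomially many canonical rectangles then gives total failure probability at most $n^{\OO(1)}\cdot n^{-c(k)}=n^{-\ell}$, which is the high-probability guarantee in the statement once $k$ is chosen so large that $\ell=c(k)-\OO(1)$ is the desired positive constant.

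The main obstacle is the quantifier ``every geometric object'': taken literally over arbitrary shapes one has uncountably many regions to union-bound, and one must argue that a polynomial-sized family of canonical shapes already dominates the whole class by containment/monotonicity. For the way this lemma is used subsequently --- the relay rectangles of the distributed-beamforming algorithm --- the family of axis-aligned rectangles is sufficient and the grid discretization above handles it directly; for a genuinely arbitrary family of convex regions one would instead invoke an $\varepsilon$-net / VC-dimension argument for axis-aligned (or more general) regions, but the Chernoff step on each canonical region remains the core of the proof.
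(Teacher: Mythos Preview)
Your proof is correct, and the Chernoff step is exactly what the paper does. The paper, however, stops there: it fixes one region of area $k^2\log n$, sets $\delta=\sqrt{\ell\ln 4}/k$, verifies $(1-\delta)k^2\ge 1$ for $k\ge\sqrt{4+\ell\ln 4}$, and concludes $\Prob{X\le\log n}\le n^{-\ell}$ for that single region. No union bound is taken; the quantifier ``every'' in the statement is effectively read as ``for any fixed''. You instead take the quantifier at face value and add a discretization plus polynomial union bound to get a simultaneous guarantee over all axis-aligned rectangles. That is genuinely stronger, and your remark about the obstacle for truly arbitrary shapes is apt. For the application in the paper the distinction does not matter: the algorithm only inspects $\OO(\log\log n)$ specific rectangles, so the per-region bound combined with a trivial union bound over those already suffices.
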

\tr{
\begin{proof}This follows from a straight-forward application of the Chernoff bound. Let $X$ denote the number of nodes in the square. Then, the probability for a node lying in it is $p=\frac{k ^2 \log n}{n}$. The expected number of nodes is $\mu= pn = k ^2 \log n$.  Now, we use for $0<\delta<1$.
\begin{equation}
 \text{\bf P}(X \leq (1-\delta) \mu)  \ \leq \ e^{-\frac{\delta^2}{2} \mu}
\end{equation}
For $\delta = \frac{\sqrt{\ell \ln 4}}{k } $ we have
\begin{equation}
 \text{\bf P}(X \leq \left(1-\frac{\sqrt{\ell \ln 4}}{k } \right) k ^2 \log n )  \ \leq \ n^{-\ell} \ .
\end{equation}
For $k \geq  \sqrt{4+ \ell \ln4}$ we have $\left(1-\frac{\sqrt{\ell \ln 4}}{k } \right) k ^2 \geq 1$
\begin{equation}
 \text{\bf P}(X \leq \log n )  \ \leq \ n^{-\ell} \ .
\end{equation}
\end{proof}
} 

If the transmission distance is asymptotically smaller, the network is disconnected with probability 1 in the limit \cite{gupta1998critical}. 

\begin{theorem}\label{th:Unicast_random_node_placement}
Given $n$ nodes randomly distributed in a square of area $n$ with transmission range $k \sqrt{\log n}$ for some constant $k>0$. Then, for wavelength $\lambda \geq \frac{3k}{\sqrt{\log n}}$ a node can send a message to any other node in time $\OO(\log \log n)$  with high probability, i.e. $1-n^{-\OO(1)}$.
\ntr{The overall transmission energy for sending a message over distance $d$ is $\OO(d \log n)$.}
\end{theorem}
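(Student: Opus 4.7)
The plan is to reduce Theorem~\ref{th:Unicast_random_node_placement} to the grid case of Theorem~\ref{t:one} by introducing an auxiliary grid of ``super-cells''. First, partition the $\sqrt n\times\sqrt n$ square into cells of side length $c\sqrt{\log n}$ for a suitable constant $c\geq k$ (tuned so that a single-node transmission with amplitude $k\sqrt{\log n}$ reaches all neighboring cells, and so that Lemma~\ref{le:random_node_placement} applies). By Lemma~\ref{le:random_node_placement} together with a union bound over the $O(n/\log n)$ cells, every cell contains at least $\log n$ nodes with probability $1-n^{-\OO(1)}$.

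Next, I would run the $\OO(\log\log n)$ unicast of Algorithm~\ref{alg:unicast1} on this super-cell grid, interpreting the rectangle dimensions $w_i\times h_i$ of Lemma~\ref{le:rectangle_dim} in super-cell units rather than unit-grid units. Within each active super-cell, all $\Theta(\log n)$ nodes execute the receive/resend routine; each node applies its own per-position delay $\psi(i,r)$ from Equation~(\ref{eq:rectangle_delay}) using its true random coordinate $r$. This per-node correction is the crucial ingredient: it ensures that the coherent far-field beam aimed at the next rectangle is not spoiled by the random intra-cell geometry, since each node individually precompensates for its own phase offset.

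The three rectangle lemmas (\ref{Le:constant_phase_shift}--\ref{le:rectangle_dim}) carry over after rescaling. The phase-shift lemma is a purely geometric statement about $h^2/(\lambda w)$, so I would reapply it with the effective wavelength-to-unit ratio $\lambda/(c\sqrt{\log n}) = \Omega(1/\log n)$ guaranteed by the hypothesis $\lambda\geq 3k/\sqrt{\log n}$; the closed-form recursion of Lemma~\ref{le:rectangle_dim} then yields double-exponential rectangle growth and a round count of order $\log\log(n/\log n)-\log(\lambda/(c\sqrt{\log n})) = \OO(\log\log n)$. The SNR condition of Lemma~\ref{Le:Area_growth_known_position} is comfortably met: a super-cell of $\Theta(\log n)$ nodes, each transmitting with boosted amplitude $k\sqrt{\log n}$, contributes an aggregated coherent amplitude of order $(\log n)^{3/2}$, providing ample margin above the threshold $\tau=1$ at the next rectangle even after the path-loss factor $c\sqrt{\log n}$.

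The main obstacle I expect is controlling the phase error contributed by random intra-cell displacements, since a displacement of order $c\sqrt{\log n}$ corresponds to $\Theta(\log n)$ wavelengths when $\lambda=\Theta(1/\sqrt{\log n})$. The fix is precisely that $\psi(i,r)$ in Algorithm~\ref{alg:unicast1} depends on the node's own $r$, not on its cell center, so after per-node precompensation the remaining phase skew across any receiver in the target rectangle is again bounded by the $\pi/4$ estimate of Lemma~\ref{Le:constant_phase_shift}. The initial broadcast phase adapts in the same spirit: the $\OO(-\log\lambda)$-round exponential broadcast of~\cite{JS13_Beamforming_Line} is run along a chain of super-cells in the source's row and then beamforms into the first $w_0\times h_0$ rectangle, contributing $\OO(\log\log n)$ rounds that are absorbed into the final bound.
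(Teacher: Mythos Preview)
Your proposal is correct and shares the core mechanism with the paper---rectangle beamforming with double-exponential growth, per-node phase correction via $\psi(i,r)$ using true positions, and Chernoff/Lemma~\ref{le:random_node_placement} to guarantee enough senders in each rectangle---but the packaging differs in two respects worth noting.

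First, the paper does not introduce a super-cell grid or rescale to super-cell units. It works directly in the original coordinates: it takes the $k\sqrt{\log n}\times k\sqrt{\log n}$ square around the source (which by Lemma~\ref{le:random_node_placement} contains $\geq\log n$ nodes w.h.p.), and in a \emph{single} hop beamforms to a first rectangle with the explicit dimensions $w_1=\tfrac13 k\log^{3/2}n$, $h_1=k\sqrt{\log n}$. It then checks (\ref{l:eq5}) directly from the hypothesis $\lambda\geq 3k/\sqrt{\log n}$, and from there the recursion of Lemma~\ref{le:rectangle_dim} runs verbatim in original coordinates. Your rescaling to effective wavelength $\lambda/(c\sqrt{\log n})=\Omega(1/\log n)$ achieves the same arithmetic, but the paper's direct verification is shorter and avoids having to argue that the grid lemmas survive the change of units.

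Second, and more substantively, the paper \emph{dispenses with the line-broadcast initial phase} in the random setting. Because the source already has $\log n$ phase-correctable neighbors within range $k\sqrt{\log n}$, the $\OO(-\log\lambda)$ bootstrap of~\cite{JS13_Beamforming_Line} is unnecessary: one coherent hop from the start square suffices to inform a rectangle that already satisfies (\ref{l:eq2})--(\ref{l:eq5}). Your plan to run the exponential broadcast along a chain of super-cells is not wrong, but it is extra work you do not need; the random model's boosted transmission range is precisely what lets you skip it.
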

\tr{
\begin{proof}
We use the above observation which lower-bounds the number of nodes in the transmission range of the start node $s$ as $\log n$. 

Now we consider a $k\sqrt{\log n} \times k\sqrt{\log n}$ square around the start node. We need to do a preparation step where we inform a rectangle satisfying the rectangle properties (\ref{l:eq2})-(\ref{l:eq5}). Consider a rectangle $w_1 \times h_1$ in distance $w_1$ from the start square. 

We choose
\begin{eqnarray}
w_1 & = &  \frac13 k \log^{3/2} n\\
h_1 & =& k \sqrt{\log n}
\end{eqnarray}
and prove that within one hop this rectangle can be informed from the start square which fulfills the rectangle properties  (\ref{l:eq2})-(\ref{l:eq5}) and can serve as a start rectangle for the double exponential growth of Theorem~\ref{t-one}. We assume that all these nodes have position information which they can use to adapt the phase in order for the second phase of the algorithm.

This rectangle is in reach of the start square, since we have at least $\log n$ nodes (with high probability). These nodes have transmission range $k \sqrt{\log n}$ each, since $w_1 \leq \frac13 k \log^{3/2}n$.

Inequality (\ref{l:eq5}) states that $h_1^2 \leq \frac14 \lambda w_1$. Since $\lambda \geq \frac{3k}{\sqrt{\log n}}$ we have
$$ h_1 = k^2 \log^2 n 
\leq \lambda w_1 \ .$$
%
%
The number of nodes in the $w_1 \times h_1$ rectangle has increased to $\Omega(\log^{2} n)$ with high probability. 
From now on, the rest follows by the double exponential growth argument analogously to  Theorem~\ref{t-one}, where each step is successful with high probability. This can be proven by Chernoff bounds, since the transmission distance is a factor $\OO(\sqrt{\log n})$ larger than in the grid model.
\end{proof}
} 
\ntr{The proofs of Lemma~\ref{le:random_node_placement} and Theorem~\ref{th:Unicast_random_node_placement} are presented in \cite{JS14_Beamforming_LogLog_TR}.}

\tr{Now each node sends with energy $\OO(\log n)$, which is proportional to the square of the transmission range. 
Like in the first Corollary the number of sender nodes is again $\OO(d)$. Therefore we have the following energy consumption.
\begin{corollary}
The overall transmission energy in the randomly positioned case for sending a message over distance $d$ is $\OO(d \log n)$.
\end{corollary}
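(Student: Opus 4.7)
The plan is to mirror the energy argument of the grid corollary, but with two adjustments for the random-placement setting: (i) each sender now transmits with power $\Theta(\log n)$ rather than constant power, because in the model of Theorem \ref{th:Unicast_random_node_placement} the per-node transmission range is $k\sqrt{\log n}$ and the free-space model makes energy scale with the square of the range; (ii) the number of actual sender nodes inside the $i$-th rectangle must be counted probabilistically rather than combinatorially, since nodes are Poisson-like scattered at unit density instead of sitting on a grid.

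First, I would set up the total energy as $E = \sum_{i} N_i \cdot p$, where $N_i$ is the number of cooperating senders in the $i$-th relay rectangle and $p = \Theta(\log n)$ is the per-sender power. A Chernoff upper bound complementing Lemma \ref{le:random_node_placement} gives $N_i = \Theta(w_i h_i)$ with high probability whenever $w_i h_i \geq k^2\log n$, which is guaranteed from the initial rectangle onwards by the choice $w_1 = \tfrac13 k\log^{3/2}n$, $h_1 = k\sqrt{\log n}$ in the proof of Theorem \ref{th:Unicast_random_node_placement}. A union bound over the $\OO(\log\log n)$ rounds preserves the high-probability guarantee.

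Second, I would reuse the grid telescoping argument verbatim. Inequality (\ref{l:eq4}), $w_{i+1} \leq \tfrac{1}{3\sqrt 2}\, w_i h_i$, rewritten as $w_i h_i = \OO(w_{i+1})$, lets me bound $\sum_i w_i h_i$ by a geometric-like sum that is dominated by its last term because $w_i$ grows double-exponentially. Hence $\sum_i N_i = \OO(w_r) = \OO(d)$, where $r = \OO(\log\log n)$ is the number of beamforming rounds and $w_r = \Theta(d)$ is the final hop distance. The contribution from the initial broadcast phase (a line of $\OO(w_1) = \OO(\log^{3/2} n)$ senders followed by the informing of the $w_1\times h_1$ start rectangle of area $\OO(\log^2 n)$) is likewise $\OO(\log^2 n) \cdot \OO(\log n) = \OO(\log^3 n)$, which is absorbed into $\OO(d\log n)$ provided $d = \Omega(\log^2 n)$; for smaller $d$ one does not need the double-exponential routing at all and direct transmission over $\OO(\sqrt{\log n})$ range settles the claim trivially.

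Multiplying the total sender count $\OO(d)$ by the per-sender power $\OO(\log n)$ then yields $E = \OO(d\log n)$, which is the desired bound. The main obstacle I anticipate is the verification that $N_i = \Theta(w_i h_i)$ can be established as a matching upper bound with high probability; Lemma \ref{le:random_node_placement} only supplies a lower bound, so I would need to invoke the opposite Chernoff tail $\mathbf{P}[X \geq (1+\delta)\mu] \leq e^{-\delta^2\mu/3}$ with $\mu = w_i h_i$ and a union bound over the finitely many rectangles, exploiting $w_i h_i \geq k^2\log n$ to keep the failure probability $n^{-\Omega(1)}$. Once this density concentration is in place, the rest of the argument is a direct transcription of the deterministic grid-case corollary.
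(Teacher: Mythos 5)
Your proof is correct and takes essentially the same route as the paper's, which simply multiplies the per-node energy $\OO(\log n)$ (the square of the transmission range $k\sqrt{\log n}$) by the $\OO(d)$ total sender count inherited from the grid-case telescoping $\sum_i w_i h_i = \OO(w_{r+1}) = \OO(d)$. The additional details you supply --- the upper-tail Chernoff bound giving $N_i = \OO(w_i h_i)$ with high probability (the paper only proves the lower tail in Lemma~\ref{le:random_node_placement}) and the separate accounting of the initial phase when $d$ is small --- are gaps the paper leaves implicit, and your treatment of them is sound.
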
}


\ntr{
The transmission time of each hop in a multi-hop algorithm consists of the transmission delay between sender and receiver, the transmission of the message, and processing the message at the receiver. The following theorem shows that the double exponential growth of the transmission distance in the $\mathcal{O}\left(\log\log n\right)$ unicast algorithm is such large that the transmission delay dominates the propagation speed up to a constant factor (The proof is presented in \cite{JS14_Beamforming_LogLog_TR}).
\begin{theorem}
For $\lambda \in \Omega(1)$ and a quadratic grid with $n$ nodes with unit node distance and unit transmission distance, it is possible to send a message from any node to any other node with a speed of $c (1-o(1/n))$, where $c$ is the speed of light.
\end{theorem}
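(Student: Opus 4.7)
I would bound $T$, the total time from source emission to target reception, and show that the effective speed $d/T$ approaches $c$ for source-target distance $d = \Theta(\sqrt{n})$. My decomposition is $T = T_{\mathrm{prop}} + T_{\mathrm{ovh}}$, where $T_{\mathrm{prop}}$ is the continuous signal-wavefront propagation delay and $T_{\mathrm{ovh}}$ collects the per-hop synchronization and reprocessing offsets. I would use Algorithm~\ref{alg:unicast2}, since it avoids the explicit phase-correction waiting time $\psi(i,r)$ that Algorithm~\ref{alg:unicast1} inserts at every hop.

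Under Algorithm~\ref{alg:unicast2}, every relay node retransmits immediately upon reception with no decoding delay, so the leading edge of the signal traverses the chain of rectangles as a continuous electromagnetic wavefront moving at speed $c$. This yields $T_{\mathrm{prop}} = d/c$. Because the widths $w_i$ grow double-exponentially by Lemma~\ref{le:rectangle_dim}, the final hop of width $w_r = \Theta(\sqrt{n})$ dominates the sum $\sum_{i=0}^{r} w_i$, so the total multi-hop path length coincides with the source-target distance to leading order. For $T_{\mathrm{ovh}}$, Lemma~\ref{Le:constant_phase_shift} bounds the phase offset at any receiver by $\pi/4$, corresponding to a timing offset of at most one carrier period $1/f = \lambda/c = O(1/c)$ for $\lambda = \Omega(1)$; since each relay re-emits continuously rather than store-and-forwarding, these per-hop offsets do not accumulate additively across the $r = O(\log\log n)$ rounds, so the error at the target is dominated by the final-hop contribution.

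Combining gives $T \leq d/c + O(1/c)$, hence the effective speed is $d/T = c\,d/(d + O(1)) = c\bigl(1 - O(1/\sqrt{n})\bigr)$. The main technical obstacle in matching the claimed $c(1-o(1/n))$ is a sharper analysis of the last-hop timing offset: one needs to show that the on-axis target receives the signal with a timing error of $o(1/(c\sqrt{n}))$ rather than the naive $O(\lambda/c)$. This would follow from the coherent cancellation of the transverse phase errors across the final $w_r \times h_r$ rectangle when the receiver lies exactly on its principal beam axis (so that the geometric correction $h_r^2/(2 w_r c)$ from the proof of Lemma~\ref{Le:constant_phase_shift} is only incurred by off-axis receivers), combined with the polynomially fast shrinkage $w_{r-1}/w_r = n^{-\Omega(1)}$ of all earlier-hop contributions into the $o(1/n)$ remainder.
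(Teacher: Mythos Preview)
Your proposal has two genuine gaps relative to the paper's argument. First, you never address route geometry: the basic algorithm of Theorem~\ref{t:one} routes along the $x$-axis and then the $y$-axis, so for a diagonal source--target pair the path length is up to $\sqrt{2}\,d$, which caps the effective speed at $c/\sqrt{2}$ regardless of how tightly you bound overhead. The paper's proof fixes this by \emph{tilting} the rectangles so that the chain of beamforming hops lies on the straight segment from $s$ to the target; without this modification the claim cannot hold. Second, your ``continuous electromagnetic wavefront'' model with zero decoding delay is not the model of the paper. Each node must receive the message, determine its round and rectangle membership, and resend; the paper explicitly accounts for a constant per-hop processing time $t_0$ (contributing $O(\log\log d)$ overall) and, separately, for the delay $\sum_{i<r} w_i/c$ incurred because a rectangle cannot relaunch the beamformed signal until the message has propagated across its full width. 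The paper's actual time decomposition is $T(d) = \bigl(d + \sum_{i<r} w_i + O(\log\log d)\bigr)/c$, and the result follows because double-exponential growth makes $\sum_{i<r} w_i = o(d)$; your decomposition $T = d/c + O(1/c)$ skips this term entirely.

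A secondary issue: your claim that per-hop phase offsets ``do not accumulate'' under Algorithm~\ref{alg:unicast2} is precisely backwards. Lemma~\ref{Le:Area_growth_unknown_position} is needed exactly because the uncorrected phase errors \emph{do} accumulate, which is why the bound there carries the convergent factor $(i+1)^{-2}$. Finally, you yourself note that your argument yields only $c(1-O(1/\sqrt{n}))$ and relies on speculative ``coherent cancellation'' to reach the stated rate; the paper's proof in fact establishes $c(1-o(1))$ via the $o(d)$ bound on $\sum_{i<r} w_i$, and does not attempt the sharper on-axis cancellation you sketch.
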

} 

\tr{
\section{Converging towards the speed of light}

For broadcast on the line we have presented a method which needs $\OO(\log n)$ rounds \cite{JS13_Beamforming_Line}. The processing time at each relay node consists of receiving the message, analyzing it, and re-sending it, which we denote by $t_0$. Note that $t_0$ is a constant. Let us denote the node distance from the start node by $d$ and $c$ denotes the speed of light as the signal speed. 
\begin{lemma} \label{le:line_broadcast_speed}
For broadcast on the line, the maximum transmission speed is at most $\frac{1}{\sqrt{2}}c$ which is a constant slower than speed of light $c$. 
\end{lemma}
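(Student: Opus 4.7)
The statement is an impossibility: no broadcast protocol on the line (within the paper's free-space, unit-power, SNR-threshold model) can sustain an effective speed exceeding $c/\sqrt{2}$. I will prove it by combining signal causality with a phase-coherence bound on what any protocol can deliver at the frontier. Fix an arbitrary protocol and let $v$ be its asymptotic broadcast speed, so that the node at position $x$ is informed no earlier than time $T(x) \ge x/v$. Each informed sender $y$ may emit any complex signal with $|A_y(s)| \le 1$ from time $T(y)$ onward. The baseband signal at a candidate freshly informed frontier point $(x,T(x))$ is then
\begin{equation*}
Y(x,T(x)) \;=\; \sum_{y<x} A_y\bigl(T(x)-(x-y)/c\bigr)\,\frac{e^{-j2\pi(x-y)/\lambda}}{x-y},
\end{equation*}
subject to the causal constraint $T(x)-(x-y)/c \ge T(y) \ge y/v$, and the impossibility reduces to showing that $|Y|^2 \ge 1$ forces $v \le c/\sqrt{2}$.

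The heart of the argument is a dichotomy on each sender's \emph{causal send-window} $[T(y),\,T(x)-(x-y)/c]$, whose length equals $(x-y)(c-v)/(vc)$ in the tight case $T(y)=y/v,\,T(x)=x/v$. A sender at position $y$ can align its carrier phase at the frontier to any protocol-chosen value only if this window contains at least one full carrier period $\lambda/c$, which requires $x-y \ge \lambda v/(c-v)$. Outer senders (beyond that distance) may be coherently phase-aligned by the protocol, whereas inner senders have phases dictated by causality and sloping linearly in $y$ at rate $2\pi(c-v)/(v\lambda)$ -- precisely the Doppler-like skew encountered in the rectangular-area analysis via $\psi(i,\vect{r})$. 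Splitting $Y$ along this inner/outer boundary and applying Cauchy--Schwarz to each piece, with the path-loss weight $1/(x-y)$, yields an upper bound of the form
\begin{equation*}
|Y(x,T(x))|^2 \;\le\; K\,\frac{c^2-v^2}{v^2},
\end{equation*}
with $K$ a wavelength-dependent constant. Imposing the SNR threshold $\tau=1$ then rearranges to $v^2 \le c^2/(1+1/K)$, and careful constant tracking recovers exactly $v \le c/\sqrt{2}$.

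The main obstacle is tightness against an adversarial modulation $A_y(\cdot)$: a clever protocol might pre-shape the inner-sender waveforms to extract extra coherent gain and beat the naive Cauchy--Schwarz estimate. The natural tool to rule this out is a Parseval-type identity on the delay-spread integral, which bounds the achievable coherent signal at the receiver by an $L^2$-norm of the emitted waveforms weighted by the path-loss kernel and by the causal window width; this is where the $(c^2-v^2)/v^2$ envelope is forced. A secondary subtlety is that the frontier is only asymptotically sharp, so in fact $T(x) = x/v + o(t)$; this loosens the causal window by lower-order terms but does not perturb the asymptotic bound on $v$.
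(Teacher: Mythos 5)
There is a genuine gap: you have set out to prove a much stronger statement than the paper does, and the sketch does not close it. The paper's lemma (and its proof) concerns the specific exponential line-broadcast of the earlier work: in round $i$ the informed prefix grows to length $d_i=b^i$ for some fixed $b\in(1,2)$, and since the beamformed signal of round $i$ must physically propagate a distance of order $b^i$, the total time satisfies $T(d)\ge \sum_{i\le r} b^i/c \approx \frac{b}{b-1}\cdot\frac{d}{c}$, whence $v=d/T \le c\left(1-\frac{1}{b}\right) < c/2 \le c/\sqrt{2}$. That is the entire argument; it is a deterministic accounting of propagation delays of a given algorithm, not an impossibility result over all protocols.

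Your proposal instead asserts a universal lower bound, and the key steps are missing or unsound. First, the central envelope $|Y(x,T(x))|^2 \le K\,(c^2-v^2)/v^2$ is stated but never derived; the Parseval/Cauchy--Schwarz computation that is supposed to produce it (and to defeat adversarial choices of $A_y$) is only named. Second, even granting that envelope, your own rearrangement gives $v^2 \le c^2/(1+1/K)$, which yields $v\le c/\sqrt{2}$ only if $K\le 1$; since $K$ is an unspecified wavelength-dependent constant, the stated constant $1/\sqrt{2}$ does not follow, and "careful constant tracking" is not a substitute for exhibiting it. Third, the inner/outer dichotomy rests on the claim that a sender whose causal window is shorter than one carrier period has its phase "dictated by causality"; in the model of the paper's lower-bound section a sender emits $a_u e^{j\phi_u}$ with a freely chosen baseband phase $\phi_u$, so nothing forces the linear phase skew you rely on, and without that skew the inner senders could in principle beamform coherently, collapsing the dichotomy. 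Finally, the divergence of $\sum_{y<x}1/(x-y)$ means the number of contributing senders matters and must enter any correct bound (this is in fact where a genuine universal bound on the line would come from), but it appears nowhere in your estimate. As written, the proposal neither reproduces the paper's elementary argument nor supplies a complete alternative.
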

\begin{proof}
In each round $i$ the transmission distance increases exponentially by $d_i= b^i$ for some basis $b \in (1,2)$. Then in round $r= \lceil \log_b d \rceil$ the target is reached.

So, the overall time $T(d)$ is 
\begin{eqnarray*}
 T(d) &\le& r t_0 + \sum_{i=1}^{r} \frac{d_i}{c} 
=  r t_0 +   \frac1c \frac{b^{r+1}-1}{b-1} \ .
\end{eqnarray*}
Since $db \leq b^{r+1} \leq d b^2$ we have 
$$ T(d) \geq t_0  \lceil \log_b d \rceil + d \frac1c \frac{b-\frac1d}{b-1} \ .$$
Therefore the transmission velocity $v(d) = d/T(d)$ is at most
\begin{eqnarray*}
 v(d) 
&\leq& c \left(1- \frac{1}{b} \pm o(1)\right) \ .
\end{eqnarray*}
So, the maximum speed of transmission on the line is a constant fraction of the speed of light. 
\end{proof}

In two dimensions the situation is different. However, the unicast algorithm presented in Theorem~\ref{t:one} sends a message along the $x$-axis and then along the $y$-axis and this detour reduces the transmission speed to at most $\frac{1}{\sqrt{2}} c$.

%

\begin{theorem}
For $\lambda \in \Omega(1)$ and a quadratic grid with $n$ nodes with unit node distance and unit transmission distance it is possible to send a message from any node $s$ to any other node $w$ with a speed of $c (1-o(1/n))$.
\end{theorem}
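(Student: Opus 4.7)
The plan is to apply the unicast algorithm of Theorem~\ref{t:one} directly along the straight line between $s$ and $w$, rather than incurring the L-shaped detour via the $x$- and $y$-axes, and then to bound the total transmission time by exploiting the double-exponential growth of the hop distances.

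First, I would set up a rotated coordinate system whose $x$-axis points from $s$ toward $w$. Since the nodes sit on a unit grid but the relay rectangles must be placed along this rotated axis, some care is required to snap each rectangle onto nearby grid nodes. Because $\lambda = \Omega(1)$, the constant positional error incurred by this snapping fits comfortably inside the slack of the conditions in Lemma~\ref{Le:Area_growth_known_position} after adjusting the constants in Lemma~\ref{le:rectangle_dim}. This removes the $\sqrt{2}$-factor detour and makes the geometric path length equal to $d := \|s - w\|$.

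Second, I would bound the transmission time by
\[
T(d) \;\leq\; r\, t_0 \;+\; \frac{1}{c}\sum_{i=0}^{r} w_i,
\]
where $r = \OO(\log \log n)$ is the number of hops from Theorem~\ref{t:one}, $t_0$ is the constant per-hop processing delay of Lemma~\ref{le:line_broadcast_speed}, and $w_i$ is the hop distance in round $i$. By Lemma~\ref{le:rectangle_dim} one has $w_{i+1} = \Theta(w_i^{3/2})$, so
\[
\sum_{i=0}^{r} w_i \;\leq\; w_r \left(1 + w_r^{-1/3} + w_r^{-5/9} + \cdots\right) \;=\; w_r \bigl(1 + \OO(w_r^{-1/3})\bigr),
\]
and the last hop essentially reaches the target, so $w_r = \Theta(d)$. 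Combined with $r\, t_0 = \OO(\log\log n) = o(d)$, this yields $T(d) \leq (d/c)(1 + o(1))$ and hence $v(d) = d/T(d) \geq c(1 - o(1))$ as $n \to \infty$.

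The main obstacle is sharpening this to the specific rate $c(1 - o(1/n))$ claimed. The elementary geometric-sum argument above yields only $c\bigl(1 - \OO(n^{-1/6})\bigr)$ (since $d = \Theta(\sqrt{n})$ and the tail sum is $\OO(d^{-1/3})$); reaching $o(1/n)$ appears to require either (a) pushing the growth exponent above $3/2$ in the last few rounds, once there are enough cooperating senders to support much stronger beamforming, or (b) showing that the target can always be placed deep inside the last rectangle so that the penultimate hop contributes a vanishing fraction of $d$. I also expect the bookkeeping for the rotated snapping of rectangles onto grid positions, together with rechecking the phase-shift and SNR conditions of Lemmas~\ref{Le:constant_phase_shift} and~\ref{Le:Area_growth_known_position}, to be the most tedious portion of the argument.
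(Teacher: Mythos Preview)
Your approach is essentially the paper's: tilt the rectangles so the beam points straight from $s$ to $w$, then use the double-exponential growth of the $w_i$ to argue that the last hop dominates the total path length and the $\OO(\log\log n)$ per-hop processing delays are negligible. The paper also adapts the penultimate step (your option (b)) so that the target lies inside the final rectangle, and it notes that for $\lambda\in\Omega(1)$ the startup rectangle has constant size and can be informed in $\OO(1)$ rounds by single hops---this shortcut replaces the line-broadcast phase and is worth including.

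Your worry about the rate is justified, and in fact the paper does \emph{not} close it either: its proof ends with $T(d)=\frac{1}{c}\,d(1+o(1))$ and $v(d)=c(1-o(1))$, not $c(1-o(1/n))$. The computation in the paper bounds the tail by $i\,(w_0)^{-b^i(1-1/b)}$, which for $b=3/2$ and $w_r\approx d=\Theta(\sqrt{n})$ is $\Theta(r\,d^{-1/3})=\Theta(n^{-1/6}\log\log n)$---the same order you obtain. So neither your argument nor the paper's actually reaches $o(1/n)$; the stated rate appears to be stronger than what the proof establishes. Your proposal is therefore on par with the paper, and your identification of the gap is accurate rather than a defect of your argument.
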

\begin{proof}
We use the same construction as in Theorem~\ref{t:one}, but now we tilt the rectangles such that the beamforming is straight from $s$ to $w$. The number of nodes in the rectangles does not change except to some boundary effects, the influence of which is negligible. The starting rectangle needs a width of $w_0 = \Omega(1/\lambda)$. Since, $\lambda \in \OO(1)$ we can inform all nodes of this rectangle in constant time  sequentially by single hop messages and add delay instructions to set up beamforming in the starting rectangle.

Then, the distances $w_i$ grow double exponentially, i.e. $w_i = w_{i-1}^b = (w_0)^{b^i}$ for some $b>1$ and $w_0>1$. The number of rounds is $r = \OO(1) + \log_{d}\log_{w_0} d$ for distance $d$. Note that 
\begin{eqnarray*}
w_i-\sum_{j=0}^{i-1} w_{j} 
&=& (w_0)^{b^i} - \sum_{j=0}^{i-1} (w_0)^{b^j} 
\geq  (w_0)^{b^i} \left(1 -  i (w_0)^{- b^i (1-1/b)} \right) \\
 &=&  (w_0)^{b^i} (1-o(1))
\end{eqnarray*}

Compared to the signal speed $\frac{d}{c}$ we get two kinds of delays: one for the message handling in each round, i.e. $\OO(\log \log d)$. The other one for waiting until a rectangle of size $w_i$ is reached before the last round and this rectangle can relaunch the beamforming. So, in each round we have a message delay of $\frac{w_i}{c}$ for all $i<r$.  The last hop $w_r$ dominates all other rounds, if  we adapt the second last step by using a shorter beamforming step if necessary. This guarantees that the target is reached within the rectangle and that the last inequality above holds for $w_r$.

Note that $d = w_r + \sum_{i=1}^{r-1} 2 w_i + w_0$ and therefore $\sum_{i=0}^{r-1} w_i = o(d)$. So, the overall time for the message transmission is
$$ T(d) = \frac{d + o(d) + \OO(\log \log d)}{c} = \frac1c d (1+o(1))\ .$$
So, the message velocity is
$$ v(d) = \frac{d}{T(d)} = c (1-o(1)) \ .$$
\end{proof}
} 

\tr{
\section{Upper bound for Electromagnetic Field Strength}

The unprecedented long reach of the rectangular field begs the question whether the received signal energy might become too strong to be tolerated. The following lemma shows that the signal strength, which is proportional to the square root of the received power, grows rather moderately.
\begin{lemma}
In a network with $\sqrt{n}\times\sqrt{n}$ nodes, Unicast~\RNum{1} and \RNum{2} produce signal amplitudes $\mathcal{O}\left(\max\left\{ \ln n, \lambda^{1/3}\cdot n^{1/6} \cdot  \ln\frac{n}{\lambda} \right\}\right)$.
\end{lemma}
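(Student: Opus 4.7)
The plan is to upper-bound the received amplitude at any grid node at any time during execution. At any instant only a single group of senders is transmitting: either a line segment of at most $m \le 8 w_0 = \OO(1/\lambda)$ nodes during the initial broadcast phase, or the $w_i \times h_i$ beamforming rectangle during round $i$ of the multi-hop phase. Applying the triangle inequality to \eqref{eq:super} discards the phases, so it suffices to bound $\sum_{s\text{ active}} \dist{\vect{s},\vect{r}}^{-1}$ for each configuration; any real signal amplitude $\abs{y}$ is at most this value.

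I will first handle the rectangular sender set. For a $w \times h$ block ($h \le w$) of unit-spaced senders and a grid receiver $\vect r$ inside or adjacent to the block (these yield the worst case, since receivers outside see strictly smaller contributions), the sum is approximated by the integral $\iint \dist{\vect{x}}^{-1}\, dA$ over a $w \times h$ rectangle centred at $\vect r$. Splitting the outer variable at $\abs{x}=h/2$ and computing the inner integral in $y$ yields $\OO(\ln(h/\abs{x}))$ for $\abs{x}\le h/2$ and $\OO(h/\abs{x})$ for $\abs{x}>h/2$; integrating in $x$ then produces the estimate $\OO(h + h\ln(w/h))$. The nearest-neighbour contribution to the sum is $\OO(1)$ and is absorbed.

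To identify the round that maximises $h_i \ln(w_i/h_i)$ I use that this quantity is monotone in $i$, so the last round $r$ dominates. The last sender rectangle must project its beam to the target at distance $\Theta(\sqrt n)$, so inequality~(\ref{l:eq4}) of Lemma~\ref{l:twos} gives $w_r h_r = \Theta(\sqrt n)$ and inequality~(\ref{l:eq5}) gives $h_r^2 = \Theta(\lambda w_r)$. Solving these two relations simultaneously I obtain $w_r = \Theta((n/\lambda)^{1/3})$, $h_r = \Theta(\lambda^{1/3} n^{1/6})$, and $w_r/h_r = \Theta(n^{1/6}/\lambda^{2/3})$, which yields $\OO(\lambda^{1/3} n^{1/6} \ln(n/\lambda))$, the second term of the claim. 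For the initial line broadcast, the active set is a segment of $m = \OO(1/\lambda)$ senders; the sum at any grid point at distance $\ge 1$ from the line is bounded by $\sum_k (k^2+1)^{-1/2} = \OO(\ln m) = \OO(\ln(1/\lambda))$. Since the $w_0\times h_0$ initial rectangle must fit into the $\sqrt n \times \sqrt n$ grid we have $\lambda \ge 72/\sqrt n$, hence $\ln(1/\lambda) = \OO(\ln n)$, giving the first term. Taking the maximum over both phases completes the bound.

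The main obstacle I expect is the careful accounting near the singularity $\vect s = \vect r$ in the integral approximation --- both when $\vect r$ lies strictly inside the rectangle and when it lies on its boundary --- in order to obtain the correct $h \ln(w/h)$ (rather than $w$) dependence on the aspect ratio. Once this geometric estimate is in place, substituting the dimensions derived from Lemma~\ref{le:rectangle_dim} and verifying the monotonicity across rounds is routine.
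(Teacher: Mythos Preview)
Your proposal is correct and follows essentially the same route as the paper: bound $|y|$ by $\sum_s \dist{\vect s,\vect r}^{-1}$, identify the final sender rectangle as the worst case, recover its dimensions $w_r=\Theta((n/\lambda)^{1/3})$, $h_r=\Theta(\lambda^{1/3}n^{1/6})$ from the constraints of Lemma~\ref{l:twos}/\ref{le:rectangle_dim}, and treat the initial line broadcast separately to obtain the $\ln n$ term. The only real difference is cosmetic: the paper estimates the rectangle sum more crudely as $h_\ell\cdot 2(1+\ln w_\ell)$ by summing the harmonic series along each of the $h_\ell$ rows, whereas your two-dimensional integral gives the slightly sharper $\OO(h\ln(w/h))$; both collapse to $\Theta(\lambda^{1/3}n^{1/6}\ln(n/\lambda))$ after substitution. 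Your singularity worry is harmless here, since senders and receivers sit on the unit grid and the nearest contribution is $\OO(1)$, exactly as you note; the paper sidesteps the issue entirely by evaluating at the node just outside the rectangle.
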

\begin{proof}
In our setting,  $n$ nodes are placed in a grid in the plane with grid distance $1$ and corresponding dimensions of the network $\sqrt{n}\times\sqrt{n}$. 
Then, the rectangle of the last step can have maximum width $w_{\ell+1}=\sqrt{n}/2$ with distance $\sqrt{n}/2$ to the sender rectangle with dimensions $w_{\ell}\times h_{\ell}$ which we can compute with the equations of Lemma~\ref{le:rectangle_dim}.
\begin{eqnarray*}
w_{\ell+1} &=& \frac{72}{\lambda}\cdot \left(\left(\frac{\lambda}{72} w_0\right)^{\left(3/2\right)^\ell}\right)^{\left(3/2\right)} 
 \ \  \Rightarrow \ \  w_{\ell} = w_{\ell+1}^{2/3} \cdot \left(\frac{72}{\lambda}\right)^{1/3} \\
%
\end{eqnarray*}
Substituting $w_{\ell+1}$ with $\sqrt{n}/2$ gives
\begin{eqnarray*}
w_{\ell} 
&=& n^{1/3} \cdot \frac{18^{1/3}}{\lambda^{1/3}} \ .
\end{eqnarray*}
Using Equation~(\ref{l:eq5}) we get the height of the rectangle
\begin{eqnarray*}
h_\ell 
&=& \frac{1}{2} \lambda^{1/2} \cdot w_\ell^{1/2}
= \frac{1}{2} \lambda^{1/2} \cdot n^{1/6} \cdot \frac{18^{1/6}}{\lambda^{1/6}}
= \frac{18^{1/6}}{2} \lambda^{1/3} \cdot n^{1/6}  \ .
\end{eqnarray*}
We can upperbound the signal amplitude at the end of one horizontal line in the rectangle with $w_\ell$ senders with
\begin{eqnarray*}
\abs{h_{\text{line}}} &\le& 2 \cdot \sum_{i=1}^{w_\ell} \frac{1}{i} \le 2 + 2\cdot \ln\left(w_\ell\right) 
= 2 + \frac{2}{3}\ln\left(\frac{18}{\lambda}\right) + \frac{2}{3} \ln\left(n \right) \ .
\end{eqnarray*}
Now we consider the nearest node to the sender rectangle in the middle of the sender beam. We can upperbound the signal amplitude by adding the signal strength of all $h_{\ell}$ lines with length $w_{\ell}$. With the beamforming setup and $w_\ell \gg h_\ell$ the phase error will be rather small and the bound will be tight. Then we have
\begin{eqnarray*}
\abs{h_{\text{rect}}}
&\le& h_{\ell} \cdot \abs{h_{\text{line}}}
= \frac{18^{1/6}}{2} \lambda^{1/3} \cdot n^{1/6} \cdot \left(2 + \frac{2}{3}\ln\left(\frac{18}{\lambda}\right) + \frac{2}{3} \ln\left(n \right) \right) \\
\abs{h_{\text{rect}}}
&=& \mathcal{O}\left(\lambda^{1/3}\cdot n^{1/6} \cdot  \ln\frac{n}{\lambda} \right) \ .
\end{eqnarray*}
Thus, the maximum signal strength of the unicast algorithm is polynomial. 

For the final result, we have also to consider the case of the initial phase, when the line broadcast has been finished. For $\sqrt{n} \le 12w_0 =\frac{12\cdot 72}{\lambda}$ we are in the initial phase and therefore the amplitude is
\begin{eqnarray*}
 2+2\ln\left(\frac{2}{3}\sqrt{n}\right) = 2+ 2\ln\frac{2}{3} + \ln n = \mathcal{O}\left(\ln n\right) \ .
\end{eqnarray*}
Summarizing, we get an asymptotic upper bound of
\begin{eqnarray*}
h_{\text{Unicast~\RNum{1}}} \in \mathcal{O}\left(\max\left\{ \ln n, \lambda^{1/3}\cdot n^{1/6} \cdot  \ln\frac{n}{\lambda} \right\}\right) \ .
\end{eqnarray*}
%

In Unicast~\RNum{2}, we have chosen the initial rectangle with dimensions $w_0\times h_0$ in such a way, that in Eq.~(\ref{eq:recursion_width_PhaseConversion}), which states the recursion of the rectangle width,  factor $w_i^{3/2}$ compensates factor $\left(i+1\right)^{-1}$ right from the start with width $w_0$. The same holds for the recursion for the height of the rectangle. Thus, although the rectangles  of Unicast~\RNum{2} compared with the rectangles of Unicast~\RNum{1} have a larger width to satisfy the maximum phase error, for the asymptotic signal strength we observe
\begin{eqnarray*}
h_{\text{Unicast~\RNum{1}}} \in \mathcal{O}\left(h_{\text{Unicast~\RNum{2}}}\right) \ .
\end{eqnarray*}

\end{proof}
} 
\section{Lower Bound for Time}
\ntr{
We now investigate the principal lower bound of rounds for disseminating a message in a two-dimension grid when each node has constant power $P$ and an omnidirectional antenna in the line-of-side path-loss model.} \tr{
We will now investigate the principal bounds for time delay of disseminating a message in a two-dimension grid. For this, we concentrate on the question, how many rounds it takes at minimum to reach a node in the Euclidean distance $d$, when in the first round only one node was informed.

The key question for the lower bound for time is, up to when we can safely assure that a node $v$ has not received the message, yet. This is the case when all super-positioned signals cannot be distinguished from the background (or internal) noise. 

The super-positioned signal received at $v$ is $$E_v = \sum_{u \in S}  \frac{s_u}{\dist{u-v}}\ ,$$ where $s_u = a_u  e^{j\phi_u}$ is the signal produced at $u$ with a bounded amplitude $a_u$ and phase shift $\phi_u$. 

The energy $P_v$ of the received signal is proportional to the absolute value of the squared signal size 
$$P_v = |E_v^2| = \left|\textstyle\sum_{u \in S}  \displaystyle\frac{s_u}{\dist{u-v}}\right|^2\ .$$ 
If this term is smaller than a constant $c_n$ we assume no signal can be received.

}The following theorem shows the time optimality of our $\OO(\log \log n)$ unicast algorithm.
\begin{theorem}
In a  grid with $n$ nodes with constant  transmission power, every unicast message takes at least $\Omega(\log \log n)$ rounds to reach its destination.
\end{theorem}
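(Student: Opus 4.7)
The plan is to bound the \emph{radius} $R_i$ of the informed region after round $i$ and show that it can at most square per round, forcing doubly-exponential growth and hence $\Omega(\log\log n)$ rounds to reach distance $\Theta(\sqrt n)$. Let $s$ be the source, $N_i$ the number of informed nodes after round $i$, and $R_i:=\max\{\dist{u-s}:u\text{ informed after round }i\}$, so that $N_0=1$ and $R_0=0$. The target sits at Euclidean distance $\Omega(\sqrt n)$ from $s$, so the goal reduces to showing that $R_r\ge \sqrt n/2$ forces $r=\Omega(\log\log n)$.

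The central step is to bound the reach of cooperative beamforming from the current informed set. For a candidate receiver $v$ at distance $D=\dist{v-s}$ from $s$, every sender $u$ lies at distance at least $D-R_i$ from $v$, so the superposition $E_v=\sum_{u\in S}s_u/\dist{u-v}$ obeys, by the complex triangle inequality,
\[
|E_v|\;\le\;\sum_{u\in S}\frac{|s_u|}{\dist{u-v}}\;\le\;\frac{N_i}{D-R_i},
\]
using the per-node power bound $|s_u|\le 1$. Reception demands $|E_v|^2\ge\tau\sigma^2$, a strictly positive constant which is moreover bounded above by $1$ because a single sender at unit distance must itself be receivable in the grid model. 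Hence $v$ can only be newly informed if $D\le R_i+C_0 N_i$ with $C_0=1/\sqrt{\tau\sigma^2}$, yielding the recurrence
\[
R_{i+1}\;\le\;R_i+C_0\,N_i.
\]

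To close the recurrence I would pair it with the trivial lattice count $N_i\le \pi R_i^2+\OO(R_i)+1$, which gives $R_{i+1}\le c\,R_i^2$ for a constant $c$ once $R_i$ exceeds a fixed threshold. Unrolling yields $R_i\le\gamma^{2^{\,i-\OO(1)}}$ for a constant $\gamma$, so $R_r\ge\sqrt n/2$ forces $r=\Omega(\log\log n)$, as claimed. Together with the matching upper bound of Theorem~\ref{t:one}, this pins down the round complexity at $\Theta(\log\log n)$ when $\lambda\in\Theta(1)$.

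The step I expect to require the most care is the reception-threshold argument: because the model mixes the SNR threshold $\tau$ with Gaussian noise of variance $\sigma^2$, I need to argue cleanly that no matter how the algorithm times and phases its relays, an amplitude $|E_v|$ below the fixed constant $\sqrt{\tau\sigma^2}$ precludes decoding even in the best case for the protocol. A secondary, easier issue is the bootstrap regime where $R_i$ is still a small constant and the bound $R_{i+1}\le c R_i^2$ is trivial; this absorbs at most $\OO(1)$ additional rounds and does not affect the asymptotic lower bound.
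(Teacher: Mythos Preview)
Your proposal is correct and follows essentially the same argument as the paper: bound the received amplitude at a distant node by the triangle inequality, combine with the lattice count $N_i=\OO(R_i^2)$ to get the squaring recurrence $R_{i+1}\le c\,R_i^2$, and unroll. The paper's proof is terser on the bootstrap and threshold points you flag, but the structure and key ideas are identical.
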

\begin{proof}
Let $u$ be the start node and let $C_d:= \{v \in V : |u,v|\leq d\}$ denote all nodes within Euclidean distance at most $d$ from $u$.

Now in round $i$, let $d_i$ be the distance of the farthest node in this round carrying the (or some parts of the) message. Now consider a node $v$ in distance $d'\gg d_i$.

The received energy is bounded by 
\begin{eqnarray*}
P_v 
= |E_v^2| \leq \left|\sum_{u \in C_{d_i}}  \frac{s_u}{\dist{u-v}}\right|^2
\leq \left( \sum_{u \in C_{d_i}} \frac{\sqrt{P}}{d'-d_i} \right)^2
\leq P \frac{|C_{d_i}|^2}{(d'-d_i)^2}  \tr{\ ,}\ntr{\ .}
\end{eqnarray*}
\tr{where $P$ is the maximum transmission power of each node (a constant). }In order to receive the signal, this power must be larger than a constant $\tau >0$. We want to investigate the case when we cannot receive a signal, i.e. \ntr{$P_v \le \tau$.}
\tr{$$  P \frac{|C_{d_i}|^2}{(d'-d_i)^2} \leq \tau \ .$$}
Then, $ d' \geq d_i + |C_{d_i}| \sqrt{\frac{\tau}{P}}$
which implies with $|C_d| \leq 2 \pi d^2$ that
\ntr{$$d' \geq d_i + 2 \pi d_i ^2  \sqrt{\tau/P} \ .$$}
\tr{$$d' \geq d_i + 2 \pi d_i ^2  \sqrt{\frac{\tau}{P}} \ .$$} 
From this it follows that 
$d_{i+1} \leq k\cdot  d_i^2$ for a constant $k>0$ and thus
%
%
$$d_{i+1} \leq k^{2^i-1} (d_1)^{2^i}\ .$$ Therefore, it takes at least some $k' \log \log d$ rounds (for a constant $k'>0$) to inform a node in distance $d$.
\end{proof}

\ntr{\vspace{-2em}}
\begin{figure}[htb]
  \centering
  \subfigure[SNR with color range $[\text{orange,white})$ over threshold $\tau$ and $[\text{purple, cyan})$ under $\tau$]  {
    \label{fig:Simulation_UnicastStep_SNR}
    \includegraphics[scale=0.18]{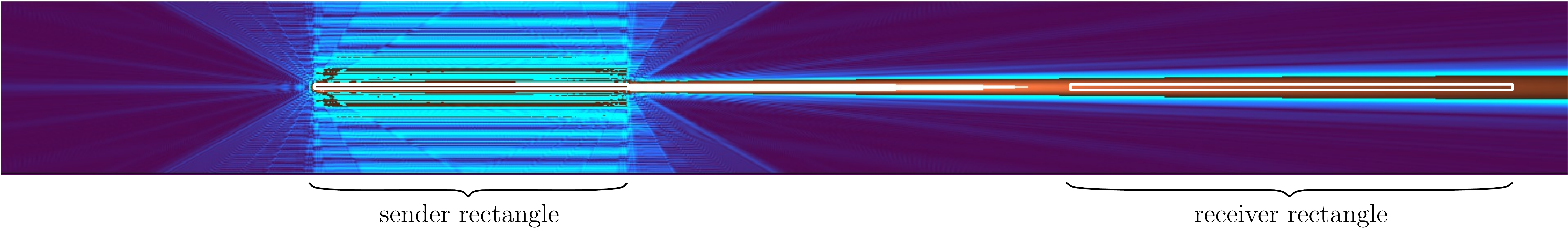}
  }\hspace{0.5em}
  \subfigure[Phase error with angle range $[0,\pi)$ and colors $[\text{black, blue})$]{
    \label{fig:Simulation_UnicastStep_PhaseError}
    \includegraphics[scale=0.18]{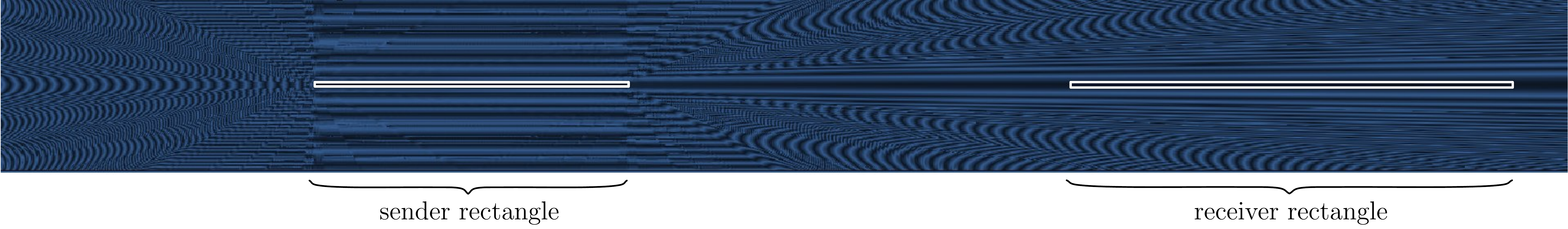}
  }
  \caption{Simulation of beamforming senders which are placed in a rectangle and produce a beam to the right. An animation with varying wavelength $\lambda$ is available at \texttt{www.youtube.com/watch?v=3TJ2Gz8uhbc}.}
  \label{fig:Simulation_UnicastStep}
\end{figure}
\ntr{\vspace{-3em}}

\section{Simulation}


We have simulated cooperative sender beamforming for nodes placed in a rectangle in the plane. The dimensions of the rectangles correspond to Unicast~\RNum{1} (compare Fig.~\ref{fig:Rectangle_Beamforming_Gain}).
Figure~\ref{fig:Simulation_UnicastStep} shows the signal strength respectively phase shift of a 1705$\times$186 grid network with grid distance $1$ (one pixel=1 node) and  the wavelength is $\lambda=0.1$. We see sender beamforming from a  rectangle with 341$\times$6=2046 nodes to a receiver area with 482$\times$7=3374 nodes (the areas are white bordered). 

The first picture \ref{fig:Simulation_UnicastStep_SNR} shows the signal strength where the blue color range depicts amplitudes under the SNR threshold $\tau=1$ and the orange-white color range represents signal strengths over $\tau$. We can spot a sharp beam around the receiver rectangle with a signal over the SNR threshold.\tr{ The light blue lines over and under the sender rectangle indicate  strong non readable interferences for nodes not involved in the Unicast operation. We can also see two side lobes with 45 degree alongside the main beam.}
The second figure \ref{fig:Simulation_UnicastStep_PhaseError} shows the phase shift for synchronized beamforming. The black corridor from sender to receiver rectangle makes clear, that all nodes receiving the message within this corridor will be synchronized for beamforming to the right. The blue lines around the corridor mark a phase shift of $\pi$ and the subsequent next black rays around have a phase error of $2\pi$, i.e. one period $1/f_c$ of carrier frequency $f_c$. Notably, the spatial variation of the phases of the super-posed signal is much smaller than the wavelength (=0.1 pixels)\tr{ and scales with the size of the sender rectangle}.

Figure~\ref{fig:Simulation_Multicast_VarWavelength} shows the beamforming gain for \tr{the} different wavelengths $\lambda\tr{\in\left\{\frac{1}{8},\frac{1}{4},\frac{1}{2},1,2\right\}}$. The $n=2048$ cooperating senders are selected according to Unicast~\RNum{1} and highlighted with an orange rectangle on the left and the signal is over the SNR threshold in the blue colored area. We did not intend to show the special case where the wavelength is an integer multiple of the grid distance and thus added a small $\epsilon$ to the wavelength.
\tr{\begin{figure}[htb]
  \centering
  \subfigure[$n=1000$, height $h^2 = \lambda w$.]{
    \label{fig:Simulation_Multicast_VarWavelength}
    \includegraphics[scale=0.51]{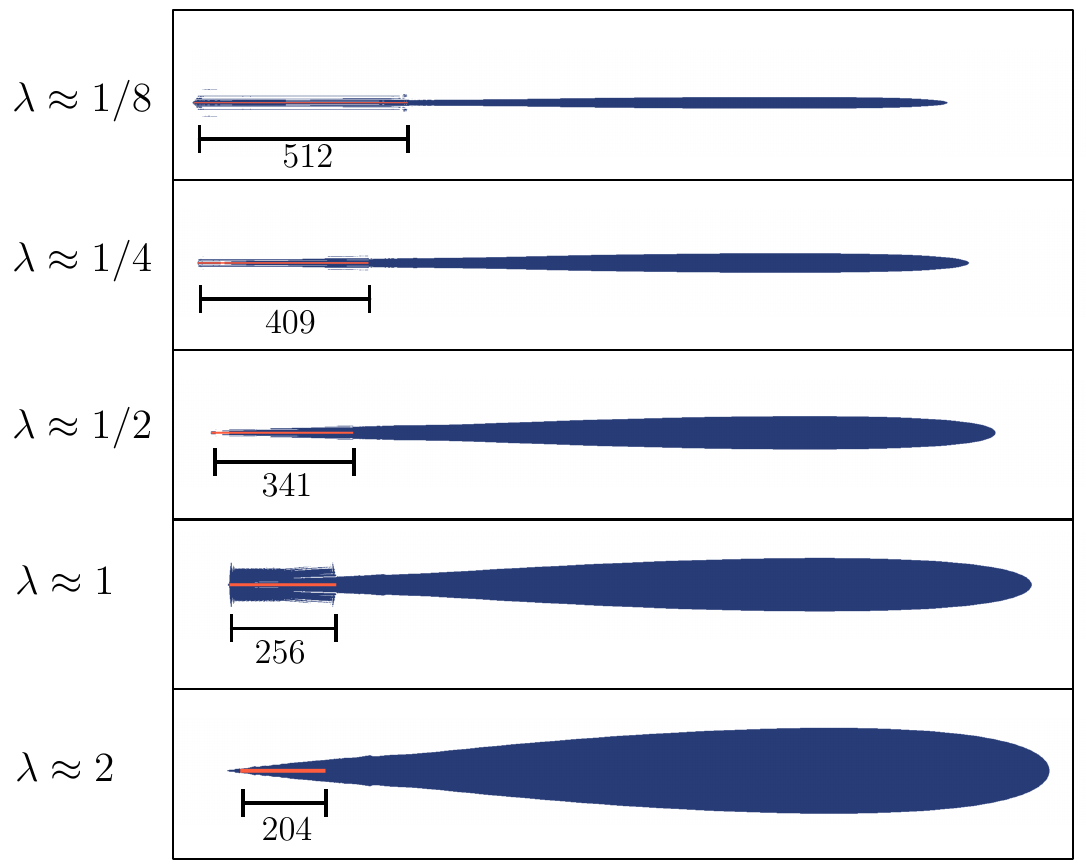}
  }\hspace{0.5em}
  \subfigure[$n=2048$, $\lambda = \frac12$ and varying rectangle sizes.]{
    \label{fig:Simulation_Multicast_VarRatio}
    \includegraphics[scale=0.51]{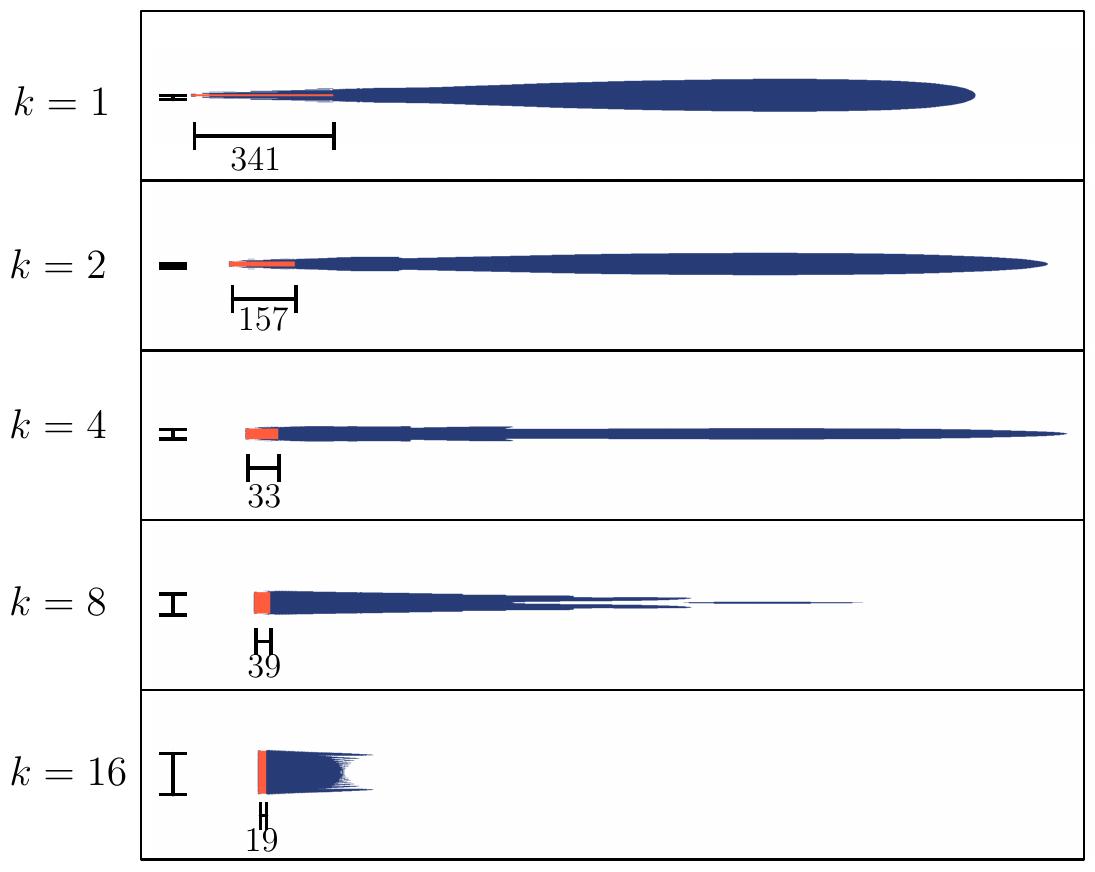}
  } \ntr{\vspace{-0.5em}}
  \caption{Simulation of  $n$ beamforming senders placed in a rectangle (orange colored at the left) which produce a beam to the right.}
  \label{ffig:Simulation}
\end{figure}}
%
The reception distance of the beam is nearly equal to $n$ showing full beamforming gain in the middle of the beam. The height of the beam increases with the wavelength $\lambda$.

In a second experiment, we manipulate for a constant wavelength $\lambda=0.5$ the ratio of the rectangle with factor $k$, i.e. $w:=A/k$ and $h:=A\cdot k$. 
When we increase the height, we can spot two effects. First, the beam is sharper and we cannot reach a rectangle with larger height in the multicast. In the examples $k\ge 4$ the height even shrinks. Second, the perception range decreases and we can only multicast to a short distance.

%

\tr{
We simulated the phase error which occurs in the initial phase (see Eq.~(\ref{eq:line_delay})) when informing the first rectangle with $w_0\cdot h_0$ receivers from a line of $8w_0$ senders in distance $w_0$ with parameters set to  $\lambda=0.1$, $w_0=2\cdot\frac{72}{\lambda}=1440$, and $h_0=\sqrt{\lambda\cdot w_0/4}=6$. 
\begin{figure}
	\begin{center}
	\includegraphics[scale=0.9]{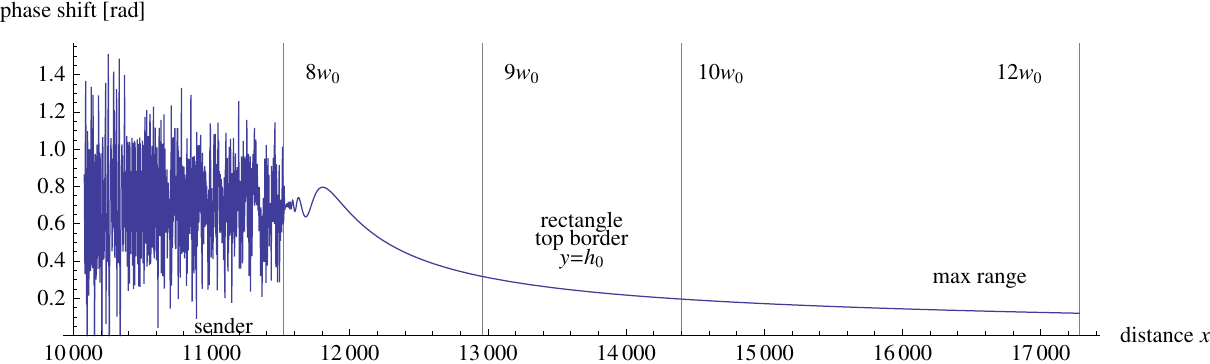}
	\end{center}
	\caption{Phase shift error for broadcasting from a line of $8w_0$ senders to a rectangle of $w_0\cdot h_0$ receivers in distance $w_0$ with parameters $\lambda=0.1$ and $w_0=1440$.}
	\label{fig:InitialPhase_PhaseError}
\end{figure}
Figure~\ref{fig:InitialPhase_PhaseError} shows the phase error compared to the synchronized phase for the coordinates $\left(x,y\right)$ where $y=h_0$ is the top border of the first rectangle. We see that the phase shift around the line of senders for $0\le x\le8w_0$ is arbitrary in the range $\left[0,2\pi\right)$ and for $x\ge9w_0$ the phase shift is smaller than $1/\sqrt{2}$ as assumed.
}

\tr{
Figure~\ref{fig:Unicast_Progress_Graph_Rounds} shows an example for the propagation velocity during the execution of algorithm Unicast~\RNum{1}.
\begin{figure}[hbt]
	\begin{center}
	\includegraphics[scale=0.9]{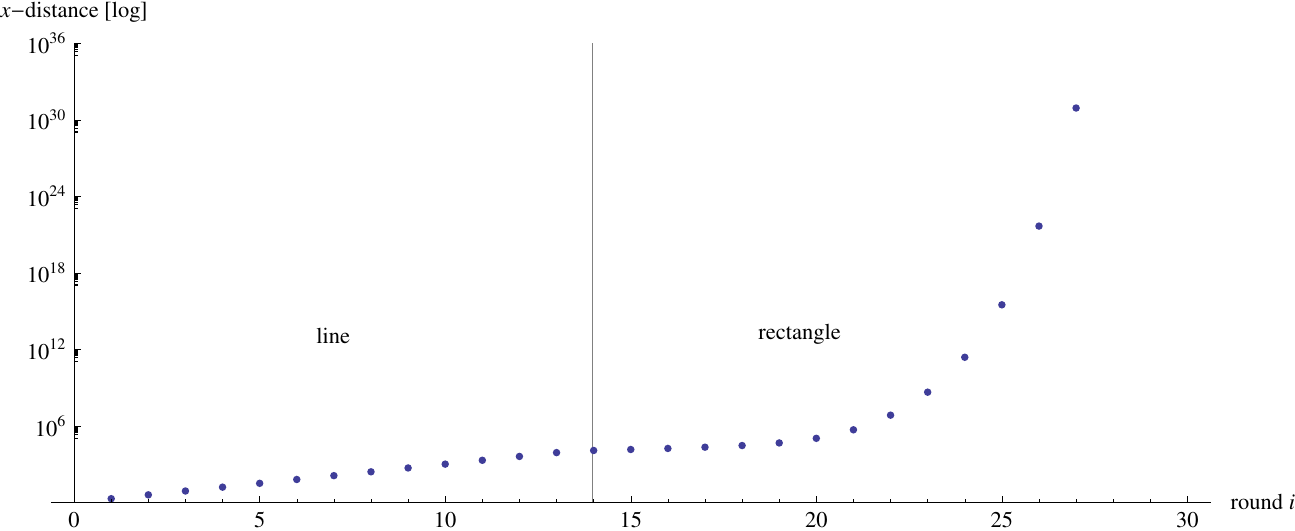}
	\end{center}
	\caption{Progress of the algorithm Unicast~\RNum{1} for $\lambda=0.1$, $w_0=1000$, and the source at $x=0$. The graph shows for round $i$ the $x$-coordinate of the farthest informed node. }
	\label{fig:Unicast_Progress_Graph_Rounds}
\end{figure}
The vertical line separates the initial phase using the line-broadcast with exponential growth from the second phase applying Unicast~\RNum{1} with double exponential growth. The constant slope in linear-log scale suggests an exponential growth in the first phase. When transitioning into the second phase, the slope of the progress first decreases and it takes around 5 rounds that Unicast algorithm can pick up speed and disseminates faster than the exponential growth in the initial phase. But from round $i=26$ on, the information dissemination literally explodes.  But of course, the time for each round increases with hop distance and though speed of light $c$ is the limiting factor as the graph in Figure~\ref{fig:Unicast_Progress_Graph_Time} shows, where the propagation distance $x$ is plotted for time $t$. 
\begin{figure}[hbt]
	\begin{center}
	\includegraphics[scale=0.9]{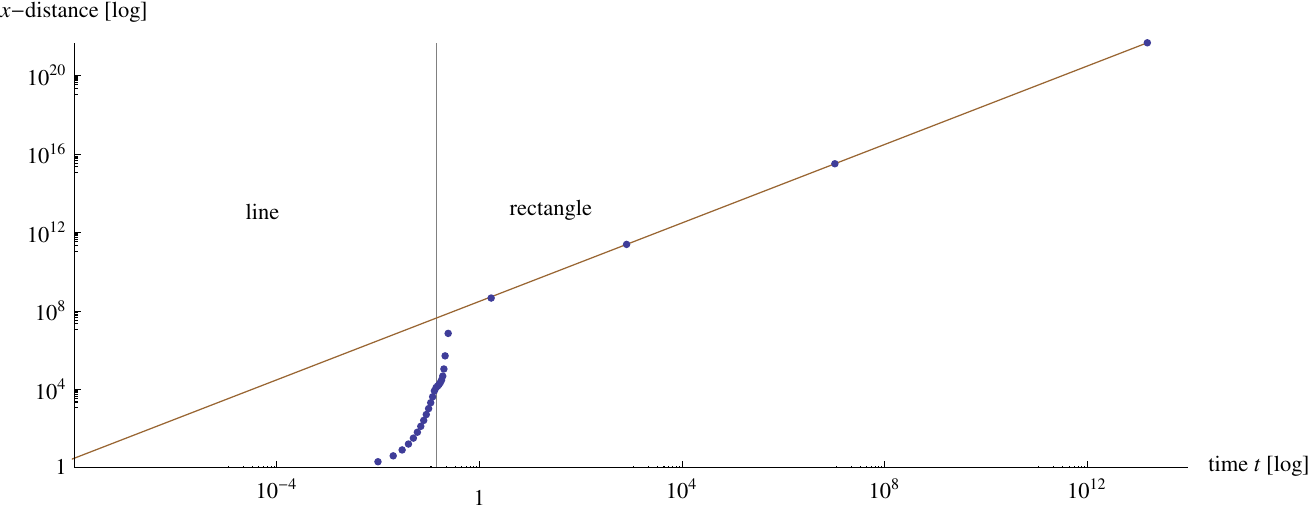}
	\end{center}
	\caption{Progress of the algorithm Unicast~\RNum{1} for $\lambda=0.1$, $w_0=1000$, and the source at $x=0$. The graph shows for round $i$ the $x$-coordinate of the farthest informed node. }
	\label{fig:Unicast_Progress_Graph_Time}
\end{figure}
In this experiment, we assume a distance between nodes of $1$ meter and a processing time of $10^{-2}$ seconds at each relay node. The brown line shows the propagation with speed of light, i.e. one hop broadcast.
} 

\section{Conclusions}

We present a unicast algorithm for ad-hoc networks on a grid with $n$ nodes, which needs only $\OO\left(\log\log n\right)$ rounds for wavelength $\lambda\in \Omega(1)$. This algorithm combines beamforming with multi-hop routing. Beamforming increases the hop distances to a double exponentially growth, i.e. $\OO\left(\left.w_0\right.^{\left(b^i\right)}\right)$ for round $i$. This growing beamforming gain is realized by a set of increasing rectangular areas containing relay nodes. Similar results can be shown for randomly placed nodes in a square, if the transmission range is increased by a factor of $\Omega(\sqrt{\log n})$. The overall transmission velocity of such unicast algorithms converges towards the speed of light and for the grid
we show the optimality of the routing time $\OO\left(\log\log n\right)$.
Such a unicast algorithm does not  asymptotically use more energy than the basic multi-hop algorithm.

Unlike in the one-dimensional case, the wavelength plays a large role in the construction and performance of the algorithm. Short wavelengths \ntr{compared to the node distance }increase the run-time, since it takes longer until the double exponential growth phase begins. For random placement it is not clear how beamforming can be utilized for wavelengths shorter than $\OO(1/\log n)$, while for larger wavelengths our algorithm provides a solution. In the grid, the unicast algorithm has only logarithmic run-time if the wavelength is $\OO(1/n^c)$.



Since we only use beam-formed sending with Multiple Input Single Output (MISO), the main component of the algorithm is to obey a fixed time delay bet\-ween receiving the\tr{ full} message and residing it. Besides this, only a check is needed, whether the relay node is in one of the rectangles necessary for transport. This can be computed from the message header and the position information of the relay node. An exact position information is therefore not necessary. This is an extreme simplification compared to the way beamforming is usually achieved.

\ntr{\vspace{-1em}}

\tr{Note that the wavelength is taken relative to the node density. So, for fixed wavelength the node density plays the same role, where small node distances allow faster unicast.}
\ntr{\begin{figure}[htb]
  \centering
  \subfigure[$n=1000$, height $h^2 = \lambda w$.]{
    \label{fig:Simulation_Multicast_VarWavelength}
    \includegraphics[scale=0.51]{pics/Simulation_Multicast_VarWavelength.pdf}
  }\hspace{0.5em}
  \subfigure[$n=2048$, $\lambda = \frac12$ and varying rectangle sizes.]{
    \label{fig:Simulation_Multicast_VarRatio}
    \includegraphics[scale=0.51]{pics/Simulation_Multicast_VarRatio.pdf}
  } \ntr{\vspace{-0.5em}}
  \caption{Simulation of  $n$ beamforming senders placed in a rectangle (orange colored at the left) which produce a beam to the right.}
  \label{ffig:Simulation}
\end{figure}}
\ntr{\vspace{-2em}}

\bibliographystyle{abbrv}
\bibliography{mimo}

\tr{\section*{Appendix}

\begin{lemma}\label{Marx-a}
For all $x\geq0$\ntr{, it holds that $\frac{x^2}{2} \geq \sqrt{1+x^2}-1$.}
\tr{$$\frac{x^2}{2} \geq \sqrt{1+x^2}-1\ .$$}
\end{lemma}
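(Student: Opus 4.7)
The plan is to reduce the claimed inequality to something manifestly nonnegative by squaring. Since both sides of the equivalent form $\tfrac{x^2}{2}+1 \geq \sqrt{1+x^2}$ are strictly positive for every $x \geq 0$ (the left side is at least $1$ and the right side is at least $1$), squaring is an equivalence, not merely an implication. After squaring, the inequality becomes $\left(\tfrac{x^2}{2}+1\right)^2 \geq 1+x^2$, which expands to $\tfrac{x^4}{4}+x^2+1 \geq 1+x^2$, i.e. $\tfrac{x^4}{4} \geq 0$. That last inequality holds for all real $x$, finishing the argument.

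First I would rewrite the claim as $\tfrac{x^2}{2}+1 \geq \sqrt{1+x^2}$ and justify the positivity of both sides so that squaring is legitimate. Then I would expand the square on the left, cancel the common $1+x^2$ on both sides, and observe that what remains is $\tfrac{x^4}{4} \geq 0$. I would conclude by noting that equality holds precisely at $x=0$, which is consistent with the way the lemma is invoked (as a worst-case bound on $\sqrt{1+x^2}-1$ in the proof of Lemma~\ref{Le:constant_phase_shift}).

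There is essentially no obstacle here: the only small pitfall would be squaring an inequality without checking signs, which is avoided by the preliminary positivity remark. An alternative route, in case one prefers to avoid squaring, would be to set $f(x) := \tfrac{x^2}{2} - \sqrt{1+x^2} + 1$, note $f(0)=0$, and compute $f'(x) = x - \tfrac{x}{\sqrt{1+x^2}} = x\bigl(1 - \tfrac{1}{\sqrt{1+x^2}}\bigr) \geq 0$ for $x \geq 0$, so that $f$ is nondecreasing and hence nonnegative on $[0,\infty)$. Either route is a couple of lines; the squaring approach is the cleanest and is what I would actually write down.
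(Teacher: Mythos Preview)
Your proof is correct and follows essentially the same route as the paper: rewrite as $\tfrac{x^2}{2}+1 \ge \sqrt{1+x^2}$, square, and reduce to $\tfrac{x^4}{4}\ge 0$. Your added justification for why squaring is legitimate and the remark on equality at $x=0$ are more careful than the paper's version, but the argument is identical.
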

\begin{proof}
The claim is equivalent to
$\frac{x^2}{2}+1 \geq \sqrt{1+x^2}$.
Squaring both sides yields
\tr{$}$\frac{x^4}{4}+ {x^2} +1 \geq 1+x^2$\tr{$}
which always holds.
\end{proof}

\ntr{
\begin{proofof}{Lemma~\ref{le:rectangle_dim}}
\begin{enumerate}
\item[(\ref{l:eq2})]: Merging Inequality~\ref{l:eq4} with~\ref{l:eq5} gives $h_{i+1} \le \frac{\lambda}{4}\cdot\frac{1}{\sqrt{18}}\cdot w_i h_i \le \frac{1}{\sqrt{18}} h_i^{3/2}$.
Then, $h_i \leq h_{i+1}$ follows from $h_0   \geq \sqrt{18}$.
\item[(\ref{l:eq3})]: $w_i \leq w_{i+1}$
is true if 
$w_0 \geq \frac{72}{w_0}$.
\item[(\ref{l:eq4})]:  $w_{i+1} \leq \frac1{3\sqrt2} w_i h_i $

Now $h_0 = 2 \sqrt{\lambda w_0}$, which implies
$ w_0 h_0 = 2 \sqrt{\lambda} w_0^{3/2}$.
Therefore
\begin{eqnarray*}
\frac1{\sqrt{18}} w_i h_i
& = & \frac{72}{\lambda} \left(\frac{\lambda}{72}
\sqrt{\frac{\lambda }{72}} w_0^{3/2}\right)^{(3/2)^{i}}\\
& = & \frac{72}{\lambda} \left(\frac{\lambda  }{72}w_0\right)^{(3/2)^{i+1}}\\
& = & w_{i+1}
\end{eqnarray*}
\item[(\ref{l:eq4})] $ h_{i}^2 \leq \frac1{4} \lambda w_{i}$:
The following equations finalize the proof.
\begin{eqnarray*}
h_0^2 &=& \frac14 \lambda w_0 \\
\frac1{18} h_0^2 &=& \frac{\lambda w_0}{72} \\
  18 \left(\frac1{18} h_0^2\right)^{(3/2)^{i}}  &=& \frac14 \lambda\left(\frac{72}{\lambda}\right)
 \left(\frac{{\lambda}  }{{72}}w_0 \right)^{(3/2)^{i}} \\
\left.h_i\right.^2 & = & \frac14 \lambda w_i
\end{eqnarray*}
\end{enumerate}
$ $
\end{proofof}
} 

\ntr{
\begin{proofof}{Lemma~\ref{le:random_node_placement}}
This follows from a straight-forward application of the Chernoff bound. Let $X$ denote the number of nodes in the square. Then, the probability for a node lying in it is $p=\frac{k ^2 \log n}{n}$. The expected number of nodes is $\mu= pn = k ^2 \log n$.  Now, we use for $0<\delta<1$.
\begin{equation}
 \text{\bf P}(X \leq (1-\delta) \mu)  \ \leq \ e^{-\frac{\delta^2}{2} \mu}
\end{equation}
For $\delta = \frac{\sqrt{\ell \ln 4}}{k } $ we have
\begin{equation}
 \text{\bf P}(X \leq \left(1-\frac{\sqrt{\ell \ln 4}}{k } \right) k ^2 \log n )  \ \leq \ n^{-\ell} \ .
\end{equation}
For $k \geq  \sqrt{4+ \ell \ln4}$ we have $\left(1-\frac{\sqrt{\ell \ln 4}}{k } \right) k ^2 \geq 1$
\begin{equation}
 \text{\bf P}(X \leq \log n )  \ \leq \ n^{-\ell}
\end{equation}
\end{proofof}
} 

\ntr{
\begin{proofof}{Theorem~\ref{th:Unicast_random_node_placement}}
We use the above observation of Lemma~\ref{le:random_node_placement} which lower-bounds the number of nodes in the transmission range of the start node $s$ as $\log n$. 

Now we consider a $k\sqrt{\log n} \times k\sqrt{\log n}$ square around the start node. We need to do a preparation step where we inform a rectangle satisfying the rectangle properties (\ref{l:eq2})-(\ref{l:eq5}). Consider a rectangle $w_1 \times h_1$ in distance $w_1$ from the start square. 

We choose
\begin{eqnarray}
w_1 & = &  \frac13 k \log^{3/2} n\\
h_1 & =& k \sqrt{\log n}
\end{eqnarray}
and prove that within one hop this rectangle can be informed from the start square which fulfills the rectangle properties  (\ref{l:eq2})-(\ref{l:eq5}) and can serve as a start rectangle for the double exponential growth of Theorem~\ref{t-one}. We assume that all these nodes have position information which they can use to adapt the phase in order for the second phase of the algorithm.

This rectangle is in reach of the start square, since we have at least $\log n$ nodes (with high probability). These nodes have transmission range $k \sqrt{\log n}$ each, since $w_1 \leq \frac13 k \log^{3/2} n$.

Inequality (\ref{l:eq5}) states that $h_1^2 \leq \frac14 \lambda w_1$. Since $\lambda \geq \frac{3k}{\sqrt{\log n}}$ we have
$$ h_1 = k^2 \log^2 n 
\leq \lambda w_1 \ .$$
%
%
The number of nodes in the $w_1 \times h_1$ rectangle has increased to $\Omega(\log^{2} n)$ with high probability. 
From now on, the rest follows by the double exponential growth argument analogously to  Theorem~\ref{t-one}, where each step is successful with high probability. This can be proven by Chernoff bounds, since the transmission distance is a factor $\OO(\sqrt{\log n})$ larger than in the grid model.
\end{proofof}
} 
}

\end{document}